\documentclass[aap]{imsart}

%% Packages
\RequirePackage{amsthm,amsmath,amsfonts,amssymb}
\RequirePackage[numbers]{natbib}
\RequirePackage[colorlinks,citecolor=blue,urlcolor=blue]{hyperref}
\RequirePackage{graphicx}

\startlocaldefs
%%%%%%%%%%%%%%%%%%%%%%%%%%%%%%%%%%%%%%%%%%%%%%
%%                                          %%
%% Uncomment next line to change            %%
%% the type of equation numbering           %%
%%                                          %%
%%%%%%%%%%%%%%%%%%%%%%%%%%%%%%%%%%%%%%%%%%%%%%
%\numberwithin{equation}{section}
%%%%%%%%%%%%%%%%%%%%%%%%%%%%%%%%%%%%%%%%%%%%%%
%%                                          %%
%% For Axiom, Claim, Corollary, Hypothesis, %%
%% Lemma, Theorem, Proposition              %%
%% use \theoremstyle{plain}                 %%
%%                                          %%
%%%%%%%%%%%%%%%%%%%%%%%%%%%%%%%%%%%%%%%%%%%%%%
\theoremstyle{plain}

\newtheorem{theorem}{Theorem}[section]
\newtheorem{lemma}[theorem]{Lemma}
\newtheorem{cor}[theorem]{Corollary}
\newtheorem{prop}[theorem]{Proposition}
\DeclareMathOperator{\diag}{diag}

%%%%%%%%%%%%%%%%%%%%%%%%%%%%%%%%%%%%%%%%%%%%%%
%%                                          %%
%% For Assumption, Definition, Example,     %%
%% Notation, Property, Remark, Fact         %%
%% use \theoremstyle{remark}                %%
%%                                          %%
%%%%%%%%%%%%%%%%%%%%%%%%%%%%%%%%%%%%%%%%%%%%%%
\theoremstyle{remark}
\newtheorem{definition}[theorem]{Definition}

%%%%%%%%%%%%%%%%%%%%%%%%%%%%%%%%%%%%%%%%%%%%%%
%% Please put your definitions here:        %%
%%%%%%%%%%%%%%%%%%%%%%%%%%%%%%%%%%%%%%%%%%%%%%

\endlocaldefs

\begin{document}

\begin{frontmatter}
\title{An Upper Bound of the Information Flow\\ From Children to Parent Node on Trees}
%\title{A sample article title with some additional note\thanksref{t1}}
\runtitle{A Bound of Information Flow on Trees}
%\thankstext{T1}{A sample additional note to the title.}

\begin{aug}
%%%%%%%%%%%%%%%%%%%%%%%%%%%%%%%%%%%%%%%%%%%%%%%
%% Only one address is permitted per author. %%
%% Only division, organization and e-mail is %%
%% included in the address.                  %%
%% Additional information can be included in %%
%% the Acknowledgments section if necessary. %%
%%%%%%%%%%%%%%%%%%%%%%%%%%%%%%%%%%%%%%%%%%%%%%%
\author[A]{\fnms{Cassius} \snm{Manuel}\ead[label=e1]{cassius.perez@univie.ac.at}}
%%%%%%%%%%%%%%%%%%%%%%%%%%%%%%%%%%%%%%%%%%%%%%
%% Addresses                                %%
%%%%%%%%%%%%%%%%%%%%%%%%%%%%%%%%%%%%%%%%%%%%%%
\address[A]{Center for Integrative Bioinformatics Vienna, Max Perutz Labs
(University of Vienna and Medical University of Vienna),
\printead{e1}}
\end{aug}

\begin{abstract}
We consider the transmission of a state from the root of a tree towards its leaves, assuming that each transmission occurs through a noisy channel. The states at the leaves are observed, while at deeper nodes we can compute the likelihood of each state given the observation. In this sense, information flows from child nodes towards the parent node.

Here we find an upper bound of this children-to-parent information flow. To do so, first we introduce a new measure of information, the memory vector, whose norm quantifies whether all states have the same likelihood. Then we find conditions such that the norm of the memory vector at the parent node can be linearly bounded by the sum of norms at the child nodes. 

We also describe the reconstruction problem of estimating the ancestral state at the root given the observation at the leaves.  We infer sufficient conditions under which the original state at the root cannot be confidently reconstructed using the observed leaves, assuming that the number of levels from the root to the leaves is large. 
 
\end{abstract}

\begin{keyword}[class=MSC]
\kwd[Primary ]{60J10}
\kwd[; secondary ]{90B15}
\end{keyword}

\begin{keyword}
\kwd{Markov chain}
\kwd{phylogenetic reconstruction}
\kwd{mixing time}
\kwd{information theory}
\end{keyword}

\end{frontmatter}
%%%%%%%%%%%%%%%%%%%%%%%%%%%%%%%%%%%%%%%%%%%%%%
%% Please use \tableofcontents for articles %%
%% with 50 pages and more                   %%
%%%%%%%%%%%%%%%%%%%%%%%%%%%%%%%%%%%%%%%%%%%%%%
%\tableofcontents

\section{Introduction}

Consider the following broadcasting process. At the root $R$ of a tree, sample a state of alphabet ${\mathbb{A} = \{ 0, 1, \cdots, K\}}$ following distribution $\bm{\mu}>0$. The tree is $d$-ary (every node has at most $d$ children) and has $g$ levels (the unique path from the root $R$ to each of the leaves has $g$ edges).
The sampled state at $R$ is then transmitted independently to each child node of $R$ through a noisy channel, described by an aperiodic, irreducible Markov matrix $P$. Thus in channel $P= (p_{ij})$, where $i,j \in \mathbb{A}$, entry $p_{ij}$ is the probability of state $i$ mutating to state $j$ during transmission.

\medskip

The transmission of states to the next level continues analogously until a state is transmitted to every leaf of the tree. The ordered set of states at the leaves is called a \textit{pattern}, represented by symbol $\partial$. See Figure \ref{fig:tree_example} for an example of this process with $\mathbb{A} = \{ 0,1,2,3\}$.
\begin{figure}[tb]
    \centering
    \includegraphics{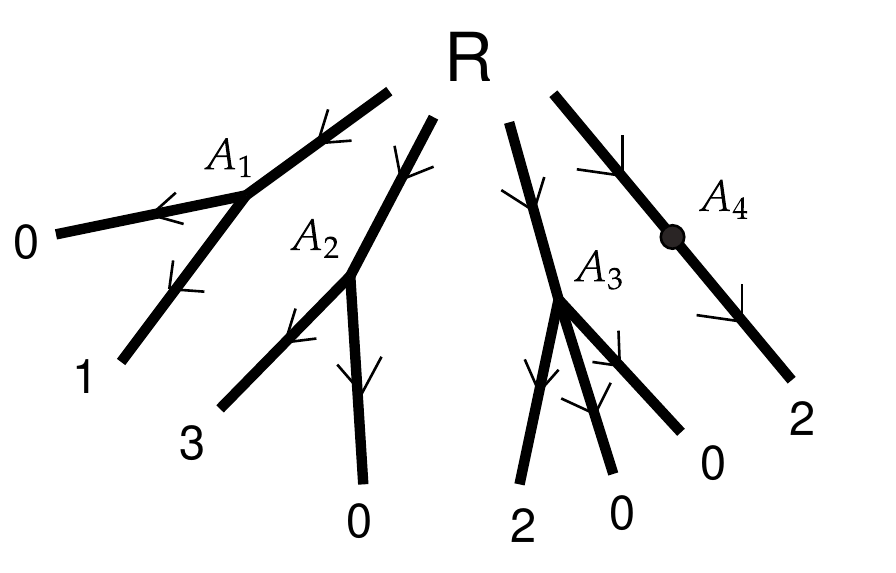}
    \caption{Example of broadcasting process on a $2$-level, incomplete $4$-ary tree. The ancestral state $i \in \mathbb{A}$ at the root $R$ is sent independently to each node $A_1, \cdots, A_4$. The probability of $i$ mutating to $k$ during this transmission is $p_{ik}$. Then, the state $i_c$ at each node $A_c$ is sent independently to each of its child leaves, with a probability $p_{i_ck}$ of mutating to $k$ during transmission. The observed pattern is $\partial = 01302002$, while the four subpatterns determined by clades $A_1, \cdots , A_4$ are $\partial_1 = 01$, $\partial_2 = 30$, $\partial_3 = 200$ and $\partial_4 = 2$.}
    \label{fig:tree_example}
\end{figure}

\medskip

We address the following problem: Estimate the ancestral state at the root $R$ from the pattern $\partial$ assuming that $g$ is large,  if the tree, the prior $\bm{\mu}$ and channel $P$ are known. Notably, Maximum Likelihood (ML) has the highest probability of a correct reconstruction among all reconstruction methods (see Theorem 17.2, \cite{guiasu1977}). ML computes the posterior state distribution $\bm{r_\partial}$ at the root given $\partial$. The ML estimate is the state with the maximum posterior probability, also called the Maximum A Posteriori (MAP) estimate. 

\medskip

Here we find sufficient conditions making the reconstruction problem unsolvable, meaning that $\bm{r_\partial} = \bm{\mu}$ for all patterns $\partial$ assuming that $g \to \infty$. More explicitly, in Theorem \ref{prop:unsolvable} we consider a $d$-ary tree and the same channel $P$ for all transmissions, with equilibrium frequency $\bm{\pi}$. We prove that, for some constant $C>0$ depending on channel $P$, if
\[C d < \min \Big\{ \frac{1}{3}, \frac{\min_i \pi_i}{\sqrt{1-\min_i \pi_i}} \Big\},\]
then the reconstruction problem is unsolvable for every pattern $\partial$. As an example, in Corollary \ref{cor:boundREVERSIBLE} we show that $C = |\theta_1|$ when the channel $P$ is reversible, where $\theta_1 \in (-1, 1)$ is the non-unitary absolutely largest eigenvalue of channel $P$.

\medskip

If the reconstruction problem is unsolvable, then the MAP estimate is identical to the maximum \textit{a priori} estimate, that is, the ML estimate using only prior $\bm{\mu}$. Since ML has an optimal probability of a correct reconstruction, we can say that, under these conditions, any pattern $\partial$ is uninformative about the ancestral root state.

\medskip
This paper is structured as follows. In Section \ref{sec:partial_likelihoods}, we describe the likelihood computation on a tree, and then in Section \ref{sec:generalProperties} we introduce a new measure of the information carried by a likelihood vector, namely the $L_2(\pi)$-norm of the memory vector. We rewrite the likelihood vector using the memory vector in Section \ref{sec:rewriting_likelihood}, which is useful for the bounds of Section \ref{sec:try_again}. 

Our core result is Theorem \ref{prop:rootByChildren}, where we bound the memory-vector norm at a parent node by the sum of memory-vector norms at the child nodes. This theorem is very versatile, since it can be applied to any tree structure, even with a different channel for every child node. The possible definitions of unsolvability are explained in Section \ref{sec:definitions}. Using the bound of Theorem \ref{prop:rootByChildren}, in Section \ref{sec:bounds} we give sufficient conditions on matrix $P$ such that the reconstruction problem is unsolvable. Unsolvability in expectation is studied in Section \ref{appn}.

\section{Comparison with previous results} The solvability of the reconstruction problem has been studied repeatedly in physics, statistic and phylogenetics (see \cite{Mossel2001SurveyIF}). Notably, focus has been placed on solvability \textit{in expectation}, meaning the following: If pattern $\partial$ has a probability $\Pr(\partial)$ of being observed, then characterize whether $\mathbb{E}[\| \bm{r_\partial} - \bm{\mu} \|] \to 0$ as $g \to \infty$.  See Section \ref{sec:definitions} regarding the possible definitions of unsolvability.

\medskip

Assuming a binary symmetric channel (the so called Ising model), in \cite{bleher1995} it was proven that $d|\theta_1|^2<1$ implies that the reconstruction problem is unsolvable in expectation, where $d$ is the $d$-arity of the tree and $\theta_1 \in (-1, 1)$ is the non-unitary absolutely largest eigenvalue of matrix $P$. Two generalizations of the Ising model were studied later, namely the binary asymmetric channel and the symmetric $\mathbb{A}$-dimensional channel (Potts model in physics \cite{Mossel2001SurveyIF}; Jukes-Cantor model in phylogenetics \cite{librofilogenia}). In \cite{BeatingtheSecondEigenvalue} and \cite{Informationflowontrees}, it is stated that \textit{for those two kind of channels}, the reconstruction problem is unsolvable in expectation if ${d|\theta_1| < 1}$. Solvable examples in expectation when ${d|\theta_1| > 1}$ are presented in \cite{BeatingtheSecondEigenvalue}.

%Since ML has an optimal reconstruction probability, we ignore the simpler census method, which received much attention in \cite{evans2000},  \cite{BeatingtheSecondEigenvalue} and \cite{Informationflowontrees}. 

\medskip

In contrast, our results apply to any noisy channel, as long as it is irreducible and aperiodic, which are typical assumptions. Notably, we do not rely on the entropy-based theory of communication described by \cite{shannon1948}, as \cite{BeatingtheSecondEigenvalue} and \cite{Informationflowontrees} do. Instead, in Subsection \ref{subsec:memory_vector} we propose a new measure of the information carried by a likelihood vector, namely the $L_2(\pi)$-norm of the memory vector.

\section{Likelihoods on a d-ary Tree} \label{sec:partial_likelihoods}
The Pruning algorithm to recursively compute the likelihood of the model parameters given a pattern was introduced in \cite{Felsenstein1981}.
In this section, we rephrase the Pruning algorithm for a $d$-ary tree with a different channel for each child node.
    
We want to compute the likelihood of each state at the root $R$ given pattern $\partial$. To that end, we define the likelihood vector at the root as \begin{equation}\bm{\rho_\partial} := (\text{Pr}(\partial\;|\; R = 0), \cdots , \text{Pr}(\partial \; |\; R = K)).\end{equation} 
Figure \ref{fig:multitree} shows the root $R$ and its $d$ outgoing channels $P_1, \cdots, P_d$  with their respective child nodes $A_1, \cdots, A_d$. Abusing of the notation, we define clade $A_c$ as the subtree rooted at node $A_c$ containing all descendant nodes from $A_c$. The leaves of clade $A_c$ determine subpattern $\partial_c$. Considered independently, each clade allows to compute a likelihood vector $\bm{\alpha_c}$ at the root $A_c$ given subpattern $\partial_c$. More explicitly, we define
\begin{equation}\aaa := (\text{Pr}(\partial_c \;|\; A_c = 0), \cdots , \text{Pr}(\partial_c \; |\; A_c = K)).\end{equation}

 \begin{figure}[tb]
    \includegraphics[width=0.6\linewidth]{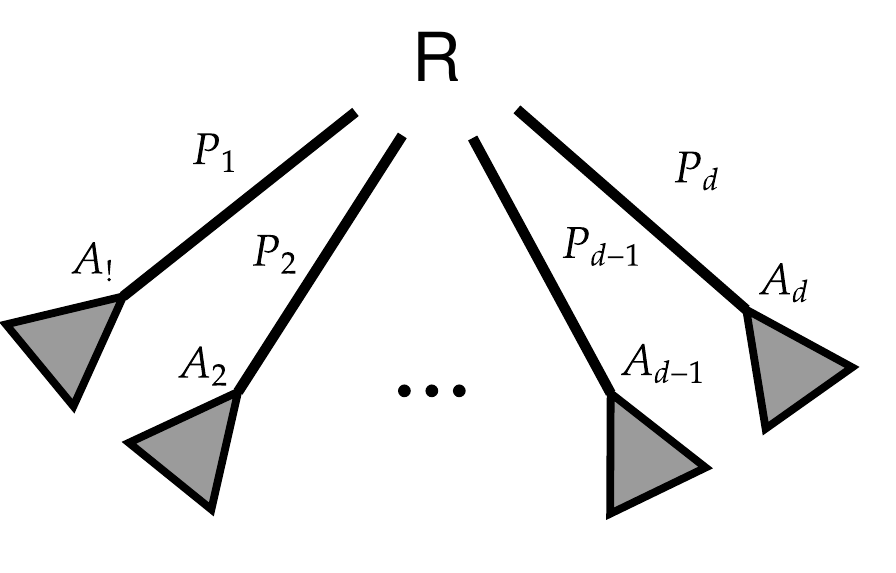}
	\caption{The state at node $A_c$, for $c \in [d]$, was obtained by transmitting the state at the root $R$ through a channel $P_c$. The likelihood vector $\bm{\rho_\partial}$ at $R$  satisfies $\bm{\rho_\partial} = \bigcomp_{c \in [d]} P_c\aaa$.} \label{fig:multitree}
\end{figure}

The Pruning algorithm consists in expressing  $\bm{\rho_\partial}$ in terms of the likelihood vectors $\bm{\alpha_1}, \cdots , \bm{\alpha_d}$.  Using the law of total probability, the likelihood vector at the root $R$ considering \textit{only} clade $A_c$ is $P_c\aaa$. More explicitly, 
\begin{equation}
    P_c\aaa = (\text{Pr}(\partial_c \;|\; R = 0), \cdots , \text{Pr}(\partial_c \; |\; R = K)).
\end{equation}

Now we use the fact that the state at each node $A_c$ was obtained \textit{independently} from the state at $R$. This implies that $\rrr$ equals the entrywise product of all likelihood vectors $P_c\aaa$. To visualize this better, if we focus on state $R = 0$, it holds that 

\begin{equation}
    \text{Pr}(\partial \;|\; R = 0) =  \prod _{c \in [d]} \text{Pr}(\partial_c \;|\; R = 0).
\end{equation}

\begin{sloppypar}
 We denote the entrywise product between two vectors $\bm{v} = (v_i)$ and $\bm{w} = (w_i)$ as ${\bm{v} \circ \bm{w} := (v_i w_i)}$. Given vectors $\bm{v_1}, \cdots , \bm{v_d}$, their entrywise product is denoted as ${\bigcomp_{c \in [d]} \bm{v_c} := \bm{v_1} \circ \cdots \circ \bm{v_d}.}$ With this notation, we can write
\end{sloppypar}

\begin{equation} \label{eq:pruning}
    \rrr = \bigcomp_{c \in [d]} P_c \aaa.
\end{equation}
Now we just have to proceed analogously: Every vector $\aaa$ can be expressed in terms of the child nodes of $A_c$, and so on, until we eventually arrive to the likelihood vectors at the leaves. The likelihood vector at a leaf where state $i$ was observed is the canonical vector $\bm{e_i}$, composed by $0$'s except for a $1$ at coordinate $i$. Thus the Pruning algorithm is a recursion based on Eq. \ref{eq:pruning} where the leaves have known likelihood vectors and the computation proceeds towards the root of the tree.

Often one needs to compute $\text{Pr}(\partial)$, that is, the probability of observing pattern $\partial$. This computation requires the assumption of some prior state distribution $\bm{\mu} > 0$ at node $R$. The law of total probability implies that \begin{equation}
    \text{Pr}(\partial) = \bm{\mu}\cdot \bm{\rho_\partial},
\end{equation}
where $"\cdot"$ denotes the Euclidean dot product. 

%For the results of this work, it is actually unnecessary to assume a prior, although Section \ref{sec:rewriting_likelihood} has a probabilistic interpretation that assumes a stationary process, that is, prior $\bm{\mu} = \bm{\pi} > 0$.

\section{The norm of the memory vector}
\label{sec:generalProperties}

\subsection{The equilibrium frequency}
  The equilibrium frequency of the noisy channel plays a fundamental role in this work. The necessary preliminaries about the equilibrium frequency can be summarized as follows (see  \cite{levin2017markov}, Chapter $1$ and Section $12.1$).
 
\begin{prop} \label{prop:reversible_posi}
For an irreducible and aperiodic Markov matrix $P$, the following holds:
\begin{itemize}
    \item[a)] Matrix $P$ has a unique equilibrium distribution $\bm{\pi} > 0$, defined by equation ${\bm{\pi}^T  P = \bm{\pi}^T}$.
    \item[b)]  Matrix $P$ has eigenvalue $\theta_0 = 1$ with algebraic multiplicity $1$, and the other eigenvalues $\{ \theta_1, \cdots, \theta_K \} \subset \mathbb{C}$ have a module smaller than $1$. We assume ${1 > |\theta_1| \geq \cdots \geq |\theta_K|}.$
    \item[c)] If $P$ is reversible, all eigenvalues are real.  Moreover, matrix $P$ has an orthogonal basis of right eigenvectors $\bm{v_k}$ and left eigenvectors $\bm{h_k}^T$ such that $\bm{h_k} = \bm{\pi} \circ \bm{v_k}$ for $k \in [0,K]$,  where $\bm{v_0} = \bm{1}$,  $\bm{h_0} = \bm{\pi}$ and
\[P =  \bm{1}\bm{\pi}^T + \bm{v_1} \bm{h_1}^T \theta_1 + \cdots + \bm{v_K} \bm{h_K}^T \theta_K. \]
\end{itemize}
\end{prop}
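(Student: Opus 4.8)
The plan is to derive all three items from two classical theorems—Perron–Frobenius for nonnegative matrices and the spectral theorem for self-adjoint operators—so the argument is essentially a bookkeeping of which hypothesis is used where, closely following \cite{levin2017markov}.

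For (a) and (b) I would start from the fact that $P$ is row-stochastic, so $P\bm{1}=\bm{1}$, giving the eigenvalue $\theta_0=1$, and $\|P\|_\infty=1$, forcing $|\theta|\le 1$ for every eigenvalue $\theta$. Irreducibility then lets me invoke Perron–Frobenius: the spectral radius $1$ is a simple root of the characteristic polynomial (hence algebraic multiplicity $1$) and admits a strictly positive left eigenvector; normalising that eigenvector to a probability vector produces the unique $\bm{\pi}>0$ with $\bm{\pi}^T P=\bm{\pi}^T$, and simplicity gives uniqueness. The remaining claim—that \emph{every} other eigenvalue lies \emph{strictly} inside the unit disk—is where aperiodicity enters: irreducible together with aperiodic means primitive, and a primitive matrix has a unique eigenvalue of maximal modulus, so $|\theta_1|<1$. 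I expect this strict-inequality step to be the only genuinely delicate point: without aperiodicity a period-$h$ chain would carry all $h$-th roots of unity as unit-modulus eigenvalues.

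For (c), reversibility is by definition the detailed-balance identity $\pi_i p_{ij}=\pi_j p_{ji}$, which says exactly that $P$ is self-adjoint for the weighted inner product $\langle\bm{f},\bm{g}\rangle_\pi:=\sum_i \pi_i f_i g_i$, since $\langle P\bm{f},\bm{g}\rangle_\pi=\sum_{i,j}\pi_i p_{ij} f_j g_i=\sum_{i,j}\pi_j p_{ji} f_j g_i=\langle\bm{f},P\bm{g}\rangle_\pi$. The spectral theorem applied to this self-adjoint operator on $\mathbb{R}^{K+1}$ then yields real eigenvalues $\theta_0,\dots,\theta_K$ and a basis of right eigenvectors $\bm{v_k}$, orthonormal in $L_2(\pi)$, which may be chosen with $\bm{v_0}=\bm{1}$ spanning the $\theta_0=1$ eigenline. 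Setting $\bm{h_k}:=\bm{\pi}\circ\bm{v_k}$, a one-line check gives $\bm{h_k}^T P=\theta_k\bm{h_k}^T$, so these are the matching left eigenvectors with $\bm{h_0}=\bm{\pi}$; moreover $\bm{h_k}^T\bm{v_\ell}=\langle\bm{v_k},\bm{v_\ell}\rangle_\pi=\delta_{k\ell}$, i.e. the pairs $(\bm{v_k},\bm{h_k})$ are biorthogonal. Feeding this biorthogonality into the resolution of the identity $I=\sum_k \bm{v_k}\bm{h_k}^T$ and multiplying on the left by $P$ yields $P=\sum_k\theta_k\bm{v_k}\bm{h_k}^T$, which is the displayed decomposition once the $k=0$ term $\bm{1}\bm{\pi}^T$ is split off; note that the $L_2(\pi)$-normalisation of the $\bm{v_k}$ is precisely what makes this identity hold as stated.
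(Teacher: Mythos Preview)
Your argument is correct and is precisely the standard textbook route. Note, however, that the paper does not actually prove this proposition: it is stated as a summary of preliminaries with a citation to \cite{levin2017markov}, Chapter~1 and Section~12.1, and no proof is given in the paper itself. Your outline is essentially the argument one finds in that reference, so there is nothing to contrast.
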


\subsection{General properties of the norm}

For some definitions of this subsection, we follow the notation of \cite{levin2017markov}, Section $12.5$. Given distribution $\bm{\pi} > 0$, for any two vectors $\bm{x} = (x_i)$ and $\bm{y} = (y_i)$ where $i\in\mathbb{A}$, we define the $\pi$-inner product as
\begin{equation}
     \langle \bm{x}, \bm{y}\rangle_\pi :=
\bm{\pi} \cdot (\bm{x} \circ \bm{y}) = \sum_i \pi_i x_i y_i.
\end{equation}

The $\pi$-inner product induces the $L_2(\pi)$-norm, defined as
 \begin{equation}
     \|\bm{x}\|_\pi :=  \sqrt{\langle \bm{x}, \bm{x}\rangle_\pi}. 
 \end{equation}
 As in every inner product space, the 
 %Cauchy-Schwartz inequality 
%\begin{equation}
%    \langle \bm{x}, \bm{y} \rangle_\pi \leq \|\bm{x} \|_\pi \| \bm{y}\|_\pi,
%\end{equation}
%and also the 
triangle inequality holds. Explicitly,
\begin{equation} \label{eq:chainNorms}
    \|\bm{x} + \bm{y} \|_\pi \leq \|\bm{x}\|_\pi + \| \bm{y}\|_\pi.
\end{equation}
 \begin{sloppypar}
  Denote the Euclidean norm of $\bm{x}$ as $\|\bm{x}\| := \sqrt{\sum_i x^2_i}$, and the uniform norm as ${\|\bm{x}\|_\infty := \max_i |x_i|}$. Since $|x_i| \leq\|\bm{x}\|$ for all $i$ and $\sqrt{\min_i} \pi_i \| \bm{x}\|\leq \|\bm{x}\|_\pi$, it holds that
\begin{equation} \label{eq:entryBYnorm}
    \|\bm{x}\|_\infty \leq \|\bm{x}\| \leq  \frac{\|\bm{x}\|_\pi}{\sqrt{\min_i \pi_i}}.
\end{equation}
 \end{sloppypar}

An important reason to use the $L_2(\pi)$-norm is that a vector cannot increase its $L_2(\pi)$-norm when we centralize its entries with weights $\bm{\pi}$, as described in the following proposition.
\begin{prop}[Centralizing inequality] \label{prop:inequality_L2}
Given vector $\bm{x}$, the $L_2(\pi)$-norm satisfies
\[ \| \bm{x} - \bm{1} \langle \bm{x}, \bm{1} \rangle_\pi \|^2_\pi = \| \bm{x}\|^2_\pi  - \langle \bm{x}, \bm{1}\rangle^2_\pi.\]
In particular, it holds that
\[ \| \bm{x} - \bm{1} \langle \bm{x}, \bm{1} \rangle_\pi \|_\pi \leq \|\bm{x}\|_\pi.\]
\end{prop}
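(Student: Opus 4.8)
The plan is to read this as the Pythagorean identity for the orthogonal splitting of $\bm{x}$ into its $\pi$-weighted mean and the centered remainder — equivalently, the familiar variance decomposition, now with respect to the weights $\bm{\pi}$. The only structural input needed is that $\bm{\pi}$ is a \emph{probability} distribution, so that $\langle \bm{1}, \bm{1}\rangle_\pi = \sum_i \pi_i = 1$; everything else is bilinearity of the $\pi$-inner product.

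Concretely, I would set $a := \langle \bm{x}, \bm{1}\rangle_\pi = \sum_i \pi_i x_i$ (the $\pi$-weighted average of the entries of $\bm{x}$), so that the vector being normed on the left is $\bm{x} - a\bm{1}$. Then I expand using symmetry and bilinearity of $\langle\cdot,\cdot\rangle_\pi$:
\[
\|\bm{x} - a\bm{1}\|^2_\pi = \langle \bm{x} - a\bm{1},\, \bm{x} - a\bm{1}\rangle_\pi = \langle \bm{x}, \bm{x}\rangle_\pi - 2a\,\langle \bm{x}, \bm{1}\rangle_\pi + a^2\,\langle \bm{1}, \bm{1}\rangle_\pi .
\]
Substituting $\langle \bm{x}, \bm{1}\rangle_\pi = a$ and $\langle \bm{1}, \bm{1}\rangle_\pi = 1$ makes the right-hand side collapse to $\|\bm{x}\|^2_\pi - 2a^2 + a^2 = \|\bm{x}\|^2_\pi - a^2 = \|\bm{x}\|^2_\pi - \langle \bm{x}, \bm{1}\rangle^2_\pi$, which is exactly the claimed equality. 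The ``in particular'' statement then follows at once: since $\langle \bm{x}, \bm{1}\rangle^2_\pi \geq 0$, the equality gives $\|\bm{x} - \bm{1}\langle \bm{x}, \bm{1}\rangle_\pi\|^2_\pi \leq \|\bm{x}\|^2_\pi$, and taking nonnegative square roots yields the inequality.

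There is no genuine obstacle in this proof; it is a two-line computation. The one point that must not be skipped is the identity $\|\bm{1}\|^2_\pi = 1$, which is precisely where the normalization $\sum_i \pi_i = 1$ enters — with an unnormalized positive weight vector the cross term and the diagonal term would not cancel cleanly, and the clean Pythagorean form would be lost. It is also worth noting for later use that the equality case shows the drop in squared norm is exactly $\langle\bm{x},\bm{1}\rangle^2_\pi$, i.e. centralizing removes precisely the ``component along $\bm{1}$'' and nothing else.
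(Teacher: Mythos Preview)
Your proof is correct and essentially identical to the paper's: both expand the squared norm by bilinearity, invoke $\langle \bm{1},\bm{1}\rangle_\pi = 1$ to collapse the cross terms, and then drop the nonnegative term $\langle \bm{x},\bm{1}\rangle_\pi^2$ to obtain the inequality. Your added remark about where the normalization $\sum_i \pi_i = 1$ enters is exactly the point the paper uses as well.
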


\begin{proof} Developing the $\pi
$-inner product, we obtain \begin{align} 
\Big\langle \bm{x} - \bm{1} \langle \bm{x}, \bm{1} \rangle_\pi, \bm{x} - \bm{1} \langle \bm{x}, \bm{1} \rangle_\pi \Big \rangle_\pi = \langle \bm{x}, \bm{x}\rangle_\pi -  2\langle \bm{x}, \bm{1} \rangle^2_\pi + \langle \bm{x}, \bm{1}\rangle^2_\pi \langle \bm{1}, \bm{1} \rangle_\pi.
\end{align}

Since $\langle \bm{1}, \bm{1} \rangle_\pi = 1$, it follows that
\[ \| \bm{x} - \bm{1} \langle \bm{x}, \bm{1} \rangle_\pi \|^2_\pi = \| \bm{x}\|^2_\pi  - \langle \bm{x}, \bm{1}\rangle^2_\pi,\]
as desired. To obtain the inequality, use the fact that $\langle \bm{x}, \bm{1}\rangle^2_\pi \geq 0$, giving
\[ \| \bm{x} - \bm{1} \langle \bm{x}, \bm{1} \rangle_\pi \|^2_\pi \leq \| \bm{x}\|^2_\pi.\]
\end{proof}

\subsection{The memory vector} \label{subsec:memory_vector}
Consider the likelihood vector $\rrr$ at node $R$ determined by pattern $\partial$. We define the normalized likelihood vector $\rrt := \rrr/(\rrr \cdot \bm{\pi})$ and the distribution $\bm{r_\pi} := \rrt \circ \bm{\pi}$. With analogous definitions, given the likelihood vector $\bm{\alpha}$ at $A$ and pattern $\pred$, $\bm{\tilde{\alpha}} := \bm{\alpha}/(\bm{\alpha} \cdot \bm{\pi})$ and $\bm{a_\pi} := \bm{\tilde{\alpha}} \circ \bm{\pi}$. Note that, assuming prior $\bm{\mu} = \bm{\pi}$ (so called stationarity), it holds that $\text{Pr}(\partial) = \rrr \cdot \bm{\pi}$, $\text{Pr}(\pred) = \bm{\alpha} \cdot \bm{\pi}$, and thus $\bm{r_\pi}$ and $\bm{a_\pi}$ are the posteriors under this assumption.

The $\pi$-inner product of $\rrt$ and $\bm{\tilde{\alpha}}$ is
\begin{equation}
     \langle \rrt, \bm{\tilde{\alpha}}\rangle_\pi :=
\bm{\pi} \cdot (\rrt \circ \bm{\tilde{\alpha}}) = \bm{r_\pi} \cdot \bm{\tilde{\alpha}} = \bm{a_\pi} \cdot  \rrt.
\end{equation}

\begin{sloppypar}
 Notably, we have $\langle \rrt, \bm{1}\rangle_\pi  = \rrt \cdot \bm{\pi} = 1$. Using the Cauchy-Schwartz inequality, if we write ${\rrt = (\tilde{\rho}_0, \cdots, \tilde{\rho}_K)}$, we obtain the lower bound
\end{sloppypar} \begin{equation}
   \|\rrt\|^2_\pi :=  \langle \rrt, \rrt\rangle_\pi = \sum_i \pi_i \tilde{\rho}^2_i= (\sum_i \pi_i) (\sum_i \pi_i \tilde{\rho}^2_i)  \geq  \sum_i \sqrt{\pi_i} \sqrt{\pi_i} \tilde{\rho}_i = 1.
\end{equation}
%\begin{equation}
%    \langle \bm{1}, \bm{1}\rangle_\pi  = \bm{1} \cdot \bm{\pi} = 1.
%\end{equation}

A useful upper bound when dealing with normalized likelihood vectors is 
\begin{equation}
    \|\rrt\|^2_\pi = \bm{r_\pi}^T \rrt \leq \max_k \tilde{\rho}_k \leq \frac{1}{\min_i \pi_i},
\end{equation} 
where we used the fact that $\rrt\cdot \bm{\pi} = 1$, and therefore for all $k$, $\tilde{\rho}_k \leq 1/\pi_k \leq  1/\min_i \pi_i$. 

However, we are more interested in the distance from $\rrt$ to $\bm{1}$. Intuitively, as $\rrt$ approaches $\bm{1}$, all ancestral states become equally likely, that is, pattern $\partial$ becomes less informative. This motivates the following definition.

\begin{definition}
We say that vector $\rrt-\bm{1}$ is  \textit{the memory vector at node $R$ given $\partial$}.
\end{definition}
When needed, the memory vector will be denoted as $\bm{m}$, with adequate indices depending on the context. The memory vector satisfies
\begin{equation} \label{eq:memory}
    0 \leq  \|\rrt-\bm{1}\|^2_\pi = \langle \rrt-\bm{1}, \rrt-\bm{1}\rangle_\pi = \|\rrt\|^2_\pi - 1  \leq \frac{1}{\min_i \pi_i} - 1,
\end{equation} 
that is, the $L_2(\pi)$-norm of any memory vector is upper bounded by $\sqrt{1/\min_i \pi_i - 1}$.

\subsection{Endomorphism of memory vectors}
\begin{sloppypar}
 Consider the process of transmission using channel $P$ with  equilibrium frequency $\bm{\pi}$. If node $A$ received its state from a node, say $PA$, then given $\partial_A$, node $PA$ has likelihood vector $P\bm{\alpha}$. Importantly, since ${\bm{\pi}^TP\bm{\alpha} = \bm{\pi}^T\bm{\alpha} = \bm{\pi} \cdot \bm{\alpha}}$, it follows that the normalized likelihood vector at node $PA$ is $P \bm{\tilde{\alpha}}$. Thus matrix $P$ is an endomorphism of normalized likelihood vectors. Consequently, the memory vector at node $PA$ is $P\bm{\tilde{\alpha}} - \bm{1}$. Moreover, since $P\bm{1} = \bm{1}$, it holds that
 \[P\bm{\tilde{\alpha}} - \bm{1} = P(\bm{\tilde{\alpha}} - \bm{1}),\]
and thus matrix $P$ is also an endomorphism of memory vectors. In Section \ref{sec:bounds}, we will assume that $\bm{\pi}$ is the equilibrium frequency of all channels to preserve this endomorphism.
\end{sloppypar}

\subsection{The reversible case}

If channel $P$ is reversible, then the eigendecomposition of Prop. \ref{prop:reversible_posi}.c applies, yielding simple ways to describe the $L_2(\pi)$-norm and the action of $P$. Indeed, using decomposition $I = \bm{1} \bm{\pi}^T  + \sum_{k\in  [K]}  \bm{v_k} \bm{h_k}^T$  where $\bm{h_k} = \bm{\pi}\circ \bm{v_k}$, we can rewrite the $L_2(\pi)$-norm as 
\begin{equation} \label{eq:decomposition_memory}
     \|\rrt\|^2_\pi = \bm{r_\pi}^T \rrt = \bm{r_\pi}^T  I \rrt = 1 +  \sum_{k \in [K]} (\rrt\cdot \bm{h_k})^2 \geq 1.
\end{equation}
 Regarding the action of a reversible Markov matrix $P$ on a normalized likelihood vector $\bm{\tilde{\alpha}}$, it can be alternatively described as
\begin{equation}
    P\bm{\tilde{\alpha}} = \bm{1} + \sum_{k \in  [K]} \theta_k (\bm{\tilde{\alpha}}\cdot \bm{h_k}) \bm{v_k},
\end{equation}
where we used the eigendecomposition of  Prop. \ref{prop:reversible_posi}.c, $P = \bm{1} \bm{\pi}^T  + \sum_{k\in  [K]}  \theta_k \bm{v_k} \bm{h_k}^T$. Recall that the eigenvalues $\theta_k$ satisfy $1 > |\theta_1| \geq \cdots \geq |\theta_K|$. Using Equations \ref{eq:memory} and \ref{eq:decomposition_memory} with ${\rrt = P\bm{\tilde{\alpha}}}$, the memory vector at node $PA$ satisfies
\begin{equation}
     \|P\bm{\tilde{\alpha}}-\bm{1}\|^2_\pi =  \sum_{k \in  [K]} \theta_k^2 (\bm{\tilde{\alpha}}\cdot \bm{h_k})^2.
\end{equation}
Interestingly, this yields the bound
\begin{equation} \label{eq:bound_memoryvector1}
      \|P\bm{\tilde{\alpha}}-\bm{1}\|^2_\pi \leq  \theta^2_1 \sum_{k \in  [K]} (\bm{\tilde{\alpha}}\cdot \bm{h_k})^2 = \theta_1^2 \; \|\bm{\tilde{\alpha}}-\bm{1}\|^2_\pi,
\end{equation} 
or taking the square root,
\begin{equation} \label{eq:bound_memoryvector}
      \|P\bm{\tilde{\alpha}}-\bm{1}\|_\pi \leq  |\theta_1| \; \|\bm{\tilde{\alpha}}-\bm{1}\|_\pi 
      %\leq |\theta_1| \; \sqrt{\frac{1}{\min_i \pi_i} - 1}
      .
\end{equation} 
%where for last inequality we used Eq. \ref{eq:memory}. 
Hence the $L_2(\pi)$-norm of the memory vector decreases at least by a factor of $|\theta_1|$ under the action of the reversible matrix $P$.

\subsection{A general bound of the norm growth}

The bounds of norm growth as the one of Eq. \ref{eq:bound_memoryvector} are useful for our proofs. Without the assumption that matrix $P$ be reversible, a general bound similar to Eq. \ref{eq:bound_memoryvector} as
\begin{equation} \label{eq:bound_memoryvectorGENERAL}
      \|P\bm{\tilde{\alpha}}-\bm{1}\|_\pi \leq  C \; \|\bm{\tilde{\alpha}}-\bm{1}\|_\pi 
      %\leq |\theta_1| \; \sqrt{\frac{1}{\min_i \pi_i} - 1}
\end{equation} 
for some constant $C>0$ is more difficult to obtain, since matrix $P$ may have complex eigenvalues or not diagonalize. A weak bound using well known properties of the singular value decomposition (explained in  \cite{horn_johnson_1991}) can be obtained as follows.

Let us define $\Pi := \bm{1}\bm{\pi}^T$ and $\Psi := \diag(\bm{\pi})$. Since $P\bm{1} =  \Pi \bm{1} = \Pi \bm{\tilde{\alpha}} = \bm{1}$ for any normalized likelihood vector $\bm{\tilde{\alpha}}$, it holds that \begin{equation}
    P\bm{\tilde{\alpha}}-\bm{1} = (P-\Pi)  (\bm{\tilde{\alpha}}-\bm{1}).
\end{equation}
Notably, matrix $P -\Pi$ has eigenvalues $0, \theta_1, \cdots, \theta_K$, although we only need its singular values. Consider the largest singular value $\sigma_1$ of $P -\Pi$, which satisfies $\sigma_1 \geq |\theta_1|$ and thus may be larger than $1$ (see \cite{horn_johnson_1991}). Let $\| \cdot \|$ denote the matrix norm induced by the Euclidean norm.  It holds that
\begin{align}
    \| P\bm{\tilde{\alpha}}-\bm{1} \|_\pi &= \|(P-\Pi)  (\bm{\tilde{\alpha}}-\bm{1})\|_\pi = \nonumber\\ 
    &= \| \sqrt{\Psi} (P-\Pi)  (\bm{\tilde{\alpha}}-\bm{1}) \| \leq \| \sqrt{\Psi} \| \;  \| P-\Pi\| \;  \|\bm{\tilde{\alpha}}-\bm{1} \|,
\end{align}
\begin{sloppypar}
 where the inequality follows because induced norms are sub-multiplicative. Since the Euclidean matrix norm coincides with the spectral norm, ${\| \sqrt{\Psi} \| = \max_i \sqrt{\pi_i}}$ and ${\| P - \Pi\| = \sigma_1}$. This together with $\| \bm{x}\| \leq \| \bm{x}\|_\pi/\min_i \sqrt{\pi_i}$ gives
\begin{equation} \label{eq:boundSingularvalue}
    \| P\bm{\tilde{\alpha}}-\bm{1} \|_\pi \leq \frac{\max_i \sqrt{\pi_i}}{\min_i \sqrt{\pi_i}} \sigma_1 \|\bm{\tilde{\alpha}}-\bm{1} \|_\pi,
\end{equation}
and thus the bound of Eq. \ref{eq:bound_memoryvectorGENERAL} holds for $C = \sigma_1 \max_i \sqrt{\pi_i} / \min_i \sqrt{\pi_i}$. This bound is specially weak when $\min_i \sqrt{\pi_i} << \max_i \sqrt{\pi_i}$. 
\end{sloppypar}
\bigskip

\section{Rewriting the likelihood vectors} \label{sec:rewriting_likelihood}

In this section, we will rewrite the likelihood vectors $\rrr$ and $\aaa$ described in Section \ref{sec:partial_likelihoods}. We assume that all matrices $P_c$ have the same equilibrium distribution $\bm{\pi}$. Recall that the normalized likelihood vectors are defined as ${\rrt := \rrr/(\rrr \cdot \bm{\pi})}$ and ${\aat := \aaa/(\aaa \cdot \bm{\pi})}$. Assuming prior $\bm{\mu} = \bm{\pi}$, it holds that ${\text{Pr}(\partial) = \rrr \cdot \bm{\pi}}$ and ${\text{Pr}(\partial_c) = \aaa \cdot \bm{\pi}}$. Thus we define ${\text{Pr}_\pi(\partial) = \rrr \cdot \bm{\pi}}$ and ${\text{Pr}_\pi(\partial_c) = \aaa \cdot \bm{\pi}}$

\medskip

The normalized likelihood vectors are useful to identify the dependence between the subpatterns $\partial_1, \cdots, \partial_d$ composing pattern $\partial$. Indeed, since $\bm{\rho_\partial} = \bigcomp_{c \in [d]} P_c\aat$ and ${\text{Pr}_\pi(\partial) =  \rrr \cdot \bm{\pi}}$, it holds that

\begin{equation} \label{eq:likelihood_distribution_posterior} \text{Pr}_\pi(\partial) =  \text{Pr}_\pi(\partial_1, \cdots , \partial_d) =  \Big( \prod_{c \in [d]} \text{Pr}_\pi(\partial_c) \Big) \Big( \bm{\pi} \cdot \bigcomp_{c \in [d]} P_c \aat \Big),
\end{equation}
showing that the dependence factor between observations $\partial_1, \cdots , \partial_d$ is $\bm{\pi} \cdot \bigcomp_{c \in [d]} P_c \aat$. We additionally define the probability of observing $\partial$ assuming subpattern independence, \begin{equation}
    {\text{Pr}_{IND}(\partial) := \prod_{c\in [d]} \text{Pr}_\pi(\partial_c)}
\end{equation} and the dependence factor $D(\partial)$ of the subpatterns of $\partial$, \begin{equation}
    {D(\partial) := \bm{\pi} \cdot \bigcomp_{c \in [d]} P_c \aat},
\end{equation} yielding the relationship
\begin{equation} \label{eq:dependence_independence}
    \text{Pr}_\pi(\partial) = D(\partial) \text{Pr}_{IND}(\partial).
\end{equation}

\begin{sloppypar}
  By introducing $\aat := \aaa/ \text{Pr}_\pi(\partial_c)$ and the memory vectors $\bm{m_c} := P_c\aat - \bm{1}$ in equation $\bm{\rho_\partial} = \bigcomp_{c \in [d]} P_c\bm{\alpha^c_\partial}$, we obtain
\end{sloppypar}
 \begin{align} \label{eq:likelihood_expansion}
     \frac{\bm{\rho_\partial}}{\text{Pr}_{IND}(\partial)} &= 
     \bigcomp_{c\in [d]}\;  \Big( \bm{1} + \bm{m_c} \Big) = \nonumber\\
     &= \bm{1} + \sum_{p \in [d]} \sum_{C \in \binom{[d]}{p}} \bigcomp_{c \in C}   \bm{m_c}.
 \end{align}
 This equation combined with the relationship $\text{Pr}_\pi(\partial) = \bm{\rho_\partial} \cdot \bm{\pi} = D(\partial) \text{Pr}_{IND}(\partial)$ gives
 \begin{align} \label{eq:dependence_expansion}
      D(\partial) &=  \frac{\text{Pr}_\pi(\partial) }{\text{Pr}_{IND}(\partial)} = \frac{\bm{\rho_\partial}}{\text{Pr}_{IND}(\partial)}\cdot \bm{\pi} = \nonumber\\
      %&= \bm{1} \cdot \bm{\pi} + \sum_{c \in [d]} \bm{m_c} \cdot \bm{\pi}  + \sum_{p \in [2, d]} \sum_{C \in \binom{[d]}{p}}  \bm{\pi} \cdot \bigcomp_{c \in C}   \bm{m_c} = \nonumber \\
     &= 1 + \sum_{p \in [d]} \sum_{C \in \binom{[d]}{p}} \bm{\pi} \cdot \bigcomp_{c \in C}     \bm{m_c},
 \end{align}
 where the terms where $p = 1$ vanish, because $\bm{\pi}^T (P_c\aat - \bm{1}) = \bm{1} - \bm{1} = 0$. The expanded products of Equations \ref{eq:likelihood_expansion} and \ref{eq:dependence_expansion} will be useful to bound information flow in Section \ref{sec:try_again}.
 %where we used $\bm{m_c} \cdot \bm{\pi} = \bm{\pi}^T(P \bm{\tilde{\alpha}^c_\partial} - \bm{1}) = 1 - 1 = 0$.
      %&= 1 + \sum_{\{ c_1, c_2 \} \in  \binom{[d]}{2}} \sum_{ k\in [K]} (\bm{v_k} \cdot \bm{a^{c_1}_\partial})(\bm{v_k} \cdot \bm{a^{c_2}_\partial}) e^{2\lambda_kt} + \sum_{p \in [3, d]} \sum_{C \in \binom{[d]}{p}}  \bm{\pi} \cdot \bigcomp_{c \in C}   \bm{f^c_\partial}(t),  
\iffalse
\begin{align}
     |\bm{\pi} \cdot \bigcomp_{c \in C} \bm{m_c}| &=    |\bm{\pi}^{1 - p/2} \cdot \bigcomp_{c \in C} \bm{\pi}^{1/2} \circ \bm{m_c}| \leq \| \bm{\pi}^{1 - p/2} \| \prod_{c \in C}\|\bm{\pi}^{1/2} \circ \bm{m_c}| \| = \\
     &= \| \bm{\pi}^{1 - p/2} \| \prod_{c \in C} \| \bm{m_c} \|_\pi \leq \frac{1}{(\min \pi_i)^{p/2}} |\theta_1|^p \prod_{c \in C} \| \bm{\aat} - \bm{1}\|_\pi \leq\\
     &\leq \frac{1}{(\min \pi_i)^{p/2}} |\theta_1|^p (\frac{1}{\min_i \pi_i} - 1)^{p/2} = \psi^p,
\end{align}

where \[\psi := \frac{1}{(\min \pi_i)^{1/2}} |\theta_1| (\frac{1}{\min_i \pi_i} - 1)^{1/2} < \frac{|\theta_1|}{\min \pi_i}\].
\begin{align}
    |D(\partial) - 1| \leq \sum_{p \in [2,d]} \binom{d}{p} \psi^p < \frac{(d\psi)^2}{2} \frac{1}{1 - d \psi}
\end{align}

Assuming that $d\psi < 1/2$, it holds that 
\begin{equation}
     |D(\partial) - 1| < \epsilon \text{ if } d\psi < \sqrt{\epsilon}.
\end{equation}

\fi

\section{Bounds of information flow} \label{sec:try_again}

Along our proofs, we often need to bound the entrywise product by a sum of vectors. To this end, we prove the following.
\begin{lemma}\label{prop:poly_inequality}
For any integer $d \geq 2$ and reals $S\in (0, 2)$ and $\epsilon>0$, if $S \leq 4\epsilon/(1+2\epsilon)$, then 
\begin{equation} \label{eq:lemma}
    (1 + \frac{S}{d})^d < 1+ (1 + \epsilon) S.
\end{equation} 
\end{lemma}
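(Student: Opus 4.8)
The plan is to remove the parameter $d$ by an upper bound that is uniform in $d\ge 2$, and then reduce the claim to a single inequality between $S$ and $\epsilon$ that is a consequence of the hypothesis.

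\emph{A $d$-free upper bound.} Expanding by the binomial theorem,
\[
(1+S/d)^d = 1 + S + \sum_{k=2}^{d}\binom{d}{k}\,\frac{S^k}{d^k}.
\]
For every $k\ge 2$ one has $\binom{d}{k}/d^k = \tfrac1{k!}\prod_{j=1}^{k-1}\bigl(1-\tfrac jd\bigr)\le \tfrac1{k!}$, and $\tfrac1{k!}\le 2^{-(k-1)}$. This is exactly where the standing assumption $S\in(0,2)$ is used: it makes $S/2<1$, so the tail is dominated by a convergent geometric series,
\[
\sum_{k=2}^{d}\binom{d}{k}\,\frac{S^k}{d^k}\ \le\ \sum_{k\ge 2} 2^{-(k-1)}S^k\ =\ \frac{S^2}{2-S}.
\]
Hence $(1+S/d)^d\le 1+S+\dfrac{S^2}{2-S}$ for all $d\ge 2$. (Equivalently, one may use $(1+S/d)^d<e^S$ together with the elementary estimate $e^S-1-S\le \dfrac{S^2}{2-S}$ on $(0,2)$, which follows from the same coefficient comparison $1/k!\le 2^{-(k-1)}$.)

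\emph{Reduction and conclusion.} By the uniform bound it suffices to prove $1+S+\dfrac{S^2}{2-S}<1+(1+\epsilon)S$, i.e. $\dfrac{S^2}{2-S}<\epsilon S$, i.e. (dividing by $S>0$) $\dfrac{S}{2-S}<\epsilon$, i.e. $S<\epsilon(2-S)$. It then remains to verify that the hypothesis $S\le \dfrac{4\epsilon}{1+2\epsilon}$ — under which $S<2$ keeps all denominators positive — yields this last inequality; this is a one-line algebraic rearrangement (clear denominators and collect the terms in $\epsilon$).

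The step I expect to be the crux is the uniform bound: one needs an estimate for $(1+S/d)^d$ that is simultaneously valid for all $d\ge 2$ and tight enough near $S=0$, because there the room available for the term $\epsilon S$ is only of order $S^2$, so any bound of the form $1+S+cS$ with a constant $c$ is useless. The geometric-series trick above — or, equivalently, the inequality $e^{S}\le 1+S+\frac{S^2}{2-S}$ for $S\in(0,2)$ — is what supplies a bound with the correct second-order behaviour at the origin while remaining uniform in $d$; once that is in hand, the remaining reduction and the algebra with the hypothesis are routine.
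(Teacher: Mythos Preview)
Your uniform bound and reduction are correct, but the ``one-line algebraic rearrangement'' you defer at the end is precisely the step that fails. The inequality $\dfrac{S}{2-S}<\epsilon$ is equivalent to $S<\dfrac{2\epsilon}{1+\epsilon}$, and for every $\epsilon>0$ one has
\[
\frac{4\epsilon}{1+2\epsilon}\;>\;\frac{2\epsilon}{1+\epsilon}
\qquad\text{(cross-multiply: }4\epsilon(1+\epsilon)=4\epsilon+4\epsilon^2>2\epsilon+4\epsilon^2=2\epsilon(1+2\epsilon)\text{)},
\]
so the hypothesis $S\le\dfrac{4\epsilon}{1+2\epsilon}$ does \emph{not} imply $\dfrac{S}{2-S}<\epsilon$. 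For any $S$ in the nonempty interval $\bigl(\tfrac{2\epsilon}{1+\epsilon},\,\tfrac{4\epsilon}{1+2\epsilon}\bigr]$ your argument does not close.

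The paper follows exactly your outline but asserts the sharper tail bound $\sum_{c\ge 2}S^c/c!<\sum_{c\ge 2}S^c/2^c=\dfrac{S^2}{2(2-S)}$; with that extra factor of $2$ the condition $\dfrac{S^2}{2(2-S)}\le\epsilon S$ rearranges \emph{exactly} to $S\le\dfrac{4\epsilon}{1+2\epsilon}$. But the termwise inequality $1/c!<1/2^c$ the paper uses is false for $c=2,3$, so that step is not justified either; your estimate $1/c!\le 2^{-(c-1)}$ is the correct one. In fact your tail bound is essentially sharp uniformly in $d$: since $(1+S/d)^d\uparrow e^S$ and $e^S-1-S\sim S^2/2$, taking $S=\tfrac{4\epsilon}{1+2\epsilon}$ and $d$ large violates the stated inequality (e.g.\ $\epsilon=1$, $S=4/3$ gives $e^{4/3}\approx 3.79>11/3=1+(1+\epsilon)S$). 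What your argument actually establishes is the lemma under the stronger hypothesis $S\le\dfrac{2\epsilon}{1+\epsilon}$.
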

\begin{proof}
Expanding $(1 + \frac{S}{d})^d$, Eq. \ref{eq:lemma} is equivalent to 
\begin{equation} \label{eq:dpolynomial2}
    \sum_{c \in [2,d]} \binom{d}{c} \frac{S^c}{d^c} < \epsilon S.
\end{equation}

Since $0 < S < 2$, we have the bound
\begin{equation} \label{eq:dpolynomial3}
    \sum_{c \in [2,d]} \binom{d}{c} \frac{S^c}{d^c} < \sum_{c \in [2,d]} \frac{S^c}{c!} < \sum_{c \in [2,\infty]} \frac{S^c}{2^c}  = \frac{S^2}{2(2 - S)}.
\end{equation}

It is easy to see that, for $S \in (0,2)$, inequality 
\begin{equation}
    \frac{S^2}{2(2 - S)} \leq \epsilon S
\end{equation}
holds iff 
\begin{equation}
    S \leq \frac{4\epsilon}{1 +2\epsilon},
\end{equation}
and the result follows.

\end{proof}

In Section \ref{sec:rewriting_likelihood}, we introduced the dependence factor $D(\partial)$, defined by Eq. \ref{eq:dependence_independence} as
\[\text{Pr}_\pi(\partial) =  \text{Pr}_\pi(\partial_1, \cdots , \partial_d) = D(\partial) \text{Pr}_{IND}(\partial).\]
Intuitively, if memory vectors $\bm{m_c}$ of Eq. \ref{eq:dependence_expansion} approach $\bm{0}$, then all patterns $\partial_1, \cdots, \partial_d$ are uninformative, and consequently they are also independent, that is, $D(\partial) \approx 1$. This is the content of the following lemma. Notably, in the statement of Lemma  \ref{prop:mixing_dependence}, normalized vectors $P_c\aat$ can be substituted by any normalized vectors $\bbt$. We use vectors $P_c\aat$ just to make the connection to other results of this work explicit.

\begin{lemma}[Mixing of the dependence factor]
\label{prop:mixing_dependence}
\begin{sloppypar}
 In the broadcasting process of Figure \ref{fig:multitree}, assume that all matrices $P_c$ have the same equilibrium frequency $\bm{\pi}$. Consider moreover the normalized likelihood vector $\rrt$ at $R$ and the normalized likelihood vectors  ${P_c\aat = \bm{1} + \bm{m_c}}$ at nodes $P_cA_c$. Then, under the assumption that  ${\sum_{c \in [d]}\| \bm{m_c}\|_\infty \leq 4\epsilon/(1 + 2\epsilon)}$, the dependence factor satisfies
\end{sloppypar}
\[|D(\partial) - 1|< \epsilon \sum_{c \in [d]}\|\bm{m_c} \|_\infty \leq \frac{4\epsilon^2}{1 + 2\epsilon}.\]

\end{lemma}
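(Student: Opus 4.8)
The plan is to start from the expansion of the dependence factor already derived in Eq.~\ref{eq:dependence_expansion}, in which the $p=1$ terms vanish:
\[ D(\partial) - 1 = \sum_{p=2}^{d} \sum_{C \in \binom{[d]}{p}} \bm{\pi} \cdot \bigcomp_{c \in C} \bm{m_c}. \]
First I would bound each summand. Because $\bm{\pi}$ is a probability vector and the $i$-th entry of $\bigcomp_{c\in C}\bm{m_c}$ is $\prod_{c\in C}(m_c)_i$, the triangle inequality together with $|(m_c)_i| \le \|\bm{m_c}\|_\infty$ gives
\[ \Big| \bm{\pi} \cdot \bigcomp_{c \in C} \bm{m_c} \Big| \le \sum_i \pi_i \prod_{c \in C} |(m_c)_i| \le \prod_{c \in C} \|\bm{m_c}\|_\infty . \]
Writing $x_c := \|\bm{m_c}\|_\infty \ge 0$ and $S := \sum_{c\in[d]} x_c$, and summing over all $C$ with $|C|\ge 2$, this yields $|D(\partial)-1| \le \sum_{p=2}^d \sum_{C\in\binom{[d]}{p}}\prod_{c\in C}x_c$.

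Next I would collapse this multi-index sum to a single-variable quantity. By the elementary symmetric polynomial identity, $\sum_{p=2}^d\sum_{C\in\binom{[d]}{p}}\prod_{c\in C}x_c = \prod_{c\in[d]}(1+x_c) - 1 - S$, and by the AM--GM inequality the $d$ numbers $1+x_c$ satisfy $\prod_{c\in[d]}(1+x_c) \le (1+S/d)^d$. Hence $|D(\partial)-1| \le (1+S/d)^d - 1 - S$. Finally I would invoke Lemma~\ref{prop:poly_inequality}: the hypothesis $S \le 4\epsilon/(1+2\epsilon)$ is exactly its input condition, and one checks that $4\epsilon/(1+2\epsilon)<2$ for every $\epsilon>0$, so $S\in(0,2)$ whenever $S>0$; the lemma then gives $(1+S/d)^d < 1+(1+\epsilon)S$, i.e.\ $(1+S/d)^d - 1 - S < \epsilon S$. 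Combining the chain of inequalities gives $|D(\partial)-1| < \epsilon S = \epsilon \sum_{c\in[d]}\|\bm{m_c}\|_\infty$, and $\epsilon S \le 4\epsilon^2/(1+2\epsilon)$ follows from the hypothesis on $S$. The degenerate cases $S=0$ (all $\bm{m_c}=\bm{0}$, hence $D(\partial)=1$) and $d=1$ (empty sum over $p\ge 2$) are immediate.

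The only step requiring genuine thought is the middle one: recognizing that the sum of all products of at least two of the $\|\bm{m_c}\|_\infty$ equals the tail $\prod_c(1+x_c)-1-S$ of the product expansion, and then using AM--GM to replace it by $(1+S/d)^d - 1 - S$ --- precisely the left-hand side appearing in Lemma~\ref{prop:poly_inequality}. The entrywise estimate (which uses only that $\bm{\pi}$ has unit mass) and the closing arithmetic are routine.
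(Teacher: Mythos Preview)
Your proof is correct and follows essentially the same approach as the paper's: both start from the expansion of Eq.~\ref{eq:dependence_expansion}, rewrite the tail $\sum_{p\ge 2}\sum_C\prod_{c\in C}(\cdot)$ as $\prod_c(1+\cdot)-1-\sum_c(\cdot)$, apply AM--GM to reach $(1+S/d)^d-1-S$, and then invoke Lemma~\ref{prop:poly_inequality}. The only cosmetic difference is that the paper keeps the coordinatewise quantities $|m_c^i|$ and takes $\max_i$ before passing to $\|\bm{m_c}\|_\infty$ at the very end, whereas you bound by $\|\bm{m_c}\|_\infty$ at the outset; the resulting chain of inequalities is the same.
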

\begin{proof}
Let us bound the dependence factor $D(\partial)$ using the expansion of Eq. \ref{eq:dependence_expansion}. First, we denote by $|\bm{m_c}|$ the vector obtained by taking the absolute value of each entry of $\bm{m_c}$, and thus write $|\bm{m_c}| = (|m_c^0|, \cdots,  |m_c^K|).$ Using the fact that a weighted sum is smaller than its biggest term, plus the Arithmetic-Geometric Mean inequality, we obtain
\begin{align}
    |D(\partial) - 1| 
    &= |\bm{\pi} \cdot \sum_{p \in [2,d]} \sum_{C \in \binom{[d]}{p}}   \bigcomp_{c \in C}   \bm{m_c}|\nonumber\\
    &\leq \max_i \sum_{p \in [2,d]} \sum_{C \in \binom{[d]}{p}}   \prod_{c \in C}   |m_c^i| =\nonumber\\
    &= \max_i \prod_{i \in [c]} (1 + |m_c^i|) - \sum_{c\in [d]}  |m_c^i| - 1 \leq \nonumber\\
    &\leq \max_i  \; \; (1 + \frac{\sum_{c \in [d]}|m_c^i|}{d})^d - \sum_{c\in [d]}  |m_c^i| - 1. \label{eq:wherestopped}
\end{align}
Using Lemma \ref{prop:poly_inequality}, if $\sum_{c \in [d]}|m_c^i| < 4\epsilon/(1 + 2\epsilon)$ for all $i$, then the expression of Eq. \ref{eq:wherestopped} can be upperly, strictly bounded by $\epsilon \max_i \sum_{c \in [d]}|m_c^i|.$ Moreover, it is clear that
\[\sum_{c \in [d]}|m_c^i| \leq \sum_{c \in [d]}\|\bm{m_c} \|_\infty.\]
Therefore, assuming that the last sum is smaller than $4\epsilon/(1 + 2\epsilon)$, the dependence factor $D(\partial)$ satisfies
\begin{align}
    |D(\partial) - 1| 
    &< \epsilon \max_i  \sum_{c \in [d]}|m_c^i| \leq \nonumber\\ 
    &\leq \epsilon\sum_{c \in [d]}\|\bm{m_c} \|_\infty, \label{eq:wherecontinues}
\end{align}

proving the first inequality of the statement. The second inequality follows by reusing the assumption $\sum_{c \in [d]}\|\bm{m_c} \|_\infty \leq 4\epsilon/(1 + 2\epsilon)$.

\end{proof}

It is well known that, for $x\approx 0$, one can use the approximation ${(1 + x)^n \approx 1+ nx}$. In this sense, for numbers close to $1$, multiplication can be approximated by addition. In a similar fashion, if memory vectors $\bm{m_c} = P_c\aat - \bm{1}$ approach $\bm{0}$, then $P_c\aat \approx \bm{1}$ and the entrywise product $\bigcomp_{c \in [d]} P_c\aat$ can be approximated by $\bm{1} + \sum_{c \in [d]} \bm{m_c}$. A more subtle argument yields a linear bound of a memory vector given pattern $\partial$ using subpatterns $\partial_c$, as explained in Theorem \ref{prop:hadamard}. As in Lemma  \ref{prop:mixing_dependence}, we use vectors $P_c\aat$ just to make the connection to other results clear, and we could write any normalized vectors $\bbt$ instead.

\begin{theorem}[Hadamard-product upper bounds of the memory vector] \label{prop:hadamard}
\begin{sloppypar}
 In the broadcasting process of Figure \ref{fig:multitree}, assume that all matrices $P_c$ have the same equilibrium frequency $\bm{\pi}$. Consider moreover the normalized likelihood vector $\rrt$ at $R$ and the normalized likelihood vectors  $P_c\aat = \bm{1} + \bm{m_c}$ at nodes $P_cA_c$. Then, under the assumption that, for some $\epsilon>0$,  ${\sum_{c \in [d]} \|\bm{m_c} \|_\infty \leq 4\epsilon/(1 + 2\epsilon)}$, the following holds:
\end{sloppypar}
\begin{itemize}
    \item[a)] We have the bound \begin{equation} \label{eq:itema}
         D(\partial) \| \rrt - \bm{1}\|_\pi < (1+\epsilon)\sum _ {c \in [d]} \| \bm{m_c}\|_\pi.
    \end{equation}
    \item[b)] If moreover $\epsilon < (1 + \sqrt{5})/4$, then \[ \|\rrt - \bm{1}\|_\pi < \frac{1+\epsilon}{1 - \frac{4\epsilon^2}{1 + 2\epsilon}}\sum _ {c \in [d]} \| \bm{m_c}\|_\pi.
    \]
\end{itemize}
\end{theorem}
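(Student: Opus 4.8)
The plan is to prove part (a) by starting from the exact expansion of $\bm{\rho_\partial}/\text{Pr}_{IND}(\partial)$ given in Eq.~\ref{eq:likelihood_expansion}, namely
\[
\frac{\bm{\rho_\partial}}{\text{Pr}_{IND}(\partial)} = \bm{1} + \sum_{p \in [d]} \sum_{C \in \binom{[d]}{p}} \bigcomp_{c \in C} \bm{m_c},
\]
and recalling that $\bm{\rho_\partial}/\text{Pr}_{IND}(\partial) = D(\partial)\,\rrt$. Subtracting $\bm{1}$ and isolating, this gives $D(\partial)\rrt - \bm{1} = \sum_{p}\sum_{C}\bigcomp_{c\in C}\bm{m_c}$, and hence $D(\partial)(\rrt - \bm{1}) = (1 - D(\partial))\bm{1} + \sum_{p}\sum_{C}\bigcomp_{c\in C}\bm{m_c}$. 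The single term $p=1$ in that double sum is exactly $\sum_{c\in[d]}\bm{m_c}$; separating it off, I would write
\[
D(\partial)(\rrt-\bm{1}) = \sum_{c\in[d]}\bm{m_c} + (1-D(\partial))\bm{1} + \sum_{p\in[2,d]}\sum_{C\in\binom{[d]}{p}}\bigcomp_{c\in C}\bm{m_c}.
\]
Then I apply the triangle inequality for $\|\cdot\|_\pi$ (Eq.~\ref{eq:chainNorms}): the first group contributes at most $\sum_{c}\|\bm{m_c}\|_\pi$, and I must show the remaining two groups together are bounded by $\epsilon\sum_c\|\bm{m_c}\|_\pi$.

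For the $\|(1-D(\partial))\bm{1}\|_\pi$ term, since $\|\bm{1}\|_\pi = 1$ this is just $|D(\partial)-1|$, and Lemma~\ref{prop:mixing_dependence} gives $|D(\partial)-1| < \epsilon\sum_c\|\bm{m_c}\|_\infty \le \epsilon\sum_c\|\bm{m_c}\|_\pi$ (using Eq.~\ref{eq:entryBYnorm} in the form $\|\bm{x}\|_\infty \le \|\bm{x}\|_\pi/\sqrt{\min_i\pi_i}$ — wait, that has the wrong direction; I need $\|\bm{m_c}\|_\infty$ compared to $\|\bm{m_c}\|_\pi$, and in fact $\|\bm{x}\|_\infty \le \|\bm{x}\| \le \|\bm{x}\|_\pi/\sqrt{\min\pi_i}$ bounds $\|\cdot\|_\infty$ from above but not necessarily by $\|\cdot\|_\pi$ alone). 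The cleaner route: bound the $p\ge 2$ tail directly in $\|\cdot\|_\pi$ by the entrywise-maximum trick used in the proof of Lemma~\ref{prop:mixing_dependence}. Concretely, for a Hadamard product $\bigcomp_{c\in C}\bm{m_c}$ one has $\|\bigcomp_{c\in C}\bm{m_c}\|_\pi^2 = \sum_i \pi_i \prod_{c\in C}(m_c^i)^2 \le \big(\max_j\prod_{c\in C}|m_c^j|\big)\cdot\big(\sum_i\pi_i\prod_{c\in C}|m_c^i|\big)$, and more usefully $\|\bigcomp_{c\in C}\bm{m_c}\|_\pi \le \prod_{c\in C}\|\bm{m_c}\|_\infty \cdot \|\bm{m_{c_0}}\|_\pi / \|\bm{m_{c_0}}\|_\infty$ is awkward; the robust statement is $\|\bigcomp_{c\in C}\bm{m_c}\|_\pi \le \big(\prod_{c\in C\setminus\{c_0\}}\|\bm{m_c}\|_\infty\big)\|\bm{m_{c_0}}\|_\pi$ for any chosen $c_0\in C$, since each entry satisfies $|\prod_{c}m_c^i| \le \prod_{c\ne c_0}\|\bm{m_c}\|_\infty\cdot|m_{c_0}^i|$. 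Summing over $C\in\binom{[d]}{p}$ and over $p\in[2,d]$, and comparing to the full product expansion $\prod_{c}(1+\|\bm{m_c}\|_\infty)$, one gets a bound of the shape $\big[(1+\tfrac{S}{d})^d - 1 - S\big]$-type in $S = \sum_c\|\bm{m_c}\|_\infty$ times $\max_c\|\bm{m_c}\|_\pi/\|\bm{m_c}\|_\infty$ — this is getting complicated, so the honest plan is: mimic the Lemma~\ref{prop:mixing_dependence} proof entrywise, bounding $\|D(\partial)(\rrt-\bm{1})\|_\pi$ via $\max_i$ over the entrywise polynomial, then use Lemma~\ref{prop:poly_inequality} to collapse the tail to $\epsilon S$, and finally pass from the $\|\cdot\|_\infty$-flavored estimate back to $\sum_c\|\bm{m_c}\|_\pi$ using $\|\bm{m_c}\|_\infty \le \|\bm{m_c}\|_\pi$ (valid since $\|\bm{x}\|_\infty \le \|\bm{x}\|$ and $\|\bm{x}\|\le\|\bm{x}\|_\pi/\sqrt{\min\pi_i}$ — again only gives $\|\bm{x}\|_\infty \le \|\bm{x}\|_\pi/\sqrt{\min\pi_i}$, so strictly I should carry the $\sqrt{\min\pi_i}$ factor unless $\pi$ makes it $\le 1$, which it need not). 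I therefore expect the main obstacle to be exactly this: tracking whether the statement is meant with $\|\bm{m_c}\|_\infty$ or $\|\bm{m_c}\|_\pi$ on the right, and getting the constant $(1+\epsilon)$ to come out clean after switching norms. Reading the statement of (a), the right side is $\|\bm{m_c}\|_\pi$, so the cleanest path is to prove the $\|\cdot\|_\pi$ triangle-inequality decomposition above and bound the $p\ge2$ tail plus the $|D(\partial)-1|$ term together by $\epsilon\sum_c\|\bm{m_c}\|_\pi$, invoking Lemma~\ref{prop:poly_inequality} with $S=\sum_c\|\bm{m_c}\|_\pi$ (legitimate since $\sum_c\|\bm{m_c}\|_\pi \ge \sum_c\|\bm{m_c}\|_\infty$, so the hypothesis $S\le 4\epsilon/(1+2\epsilon)$ may need strengthening — or one observes $\sum_c\|\bm{m_c}\|_\infty$ is what appears and the $\|\cdot\|_\pi$ on the right is an over-estimate that only helps).

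For part (b): once (a) is established, I combine it with the lower bound on $D(\partial)$ coming from Lemma~\ref{prop:mixing_dependence}. That lemma gives $|D(\partial)-1| \le \tfrac{4\epsilon^2}{1+2\epsilon}$, hence $D(\partial) \ge 1 - \tfrac{4\epsilon^2}{1+2\epsilon}$. The hypothesis $\epsilon < (1+\sqrt5)/4$ is exactly what makes $\tfrac{4\epsilon^2}{1+2\epsilon} < 1$, i.e.\ $4\epsilon^2 - 2\epsilon - 1 < 0$, so that $D(\partial) > 0$ and we may divide. Dividing both sides of Eq.~\ref{eq:itema} by $D(\partial)$ yields
\[
\|\rrt-\bm{1}\|_\pi < \frac{(1+\epsilon)}{D(\partial)}\sum_{c\in[d]}\|\bm{m_c}\|_\pi \le \frac{1+\epsilon}{1 - \frac{4\epsilon^2}{1+2\epsilon}}\sum_{c\in[d]}\|\bm{m_c}\|_\pi,
\]
which is the claim. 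So (b) is essentially a two-line corollary of (a) plus the positivity check on $D(\partial)$; the only thing to be careful about is that the inequality in (a) is strict and $D(\partial)$ could in principle exceed $1$, in which case dividing by $D(\partial)$ only makes the bound smaller and the displayed (larger) right-hand side still holds. I expect no real difficulty in (b); all the work is in (a), and within (a) the delicate point is the bookkeeping that converts the entrywise polynomial tail estimate (where Lemma~\ref{prop:poly_inequality} naturally applies with the $\ell_\infty$-type quantity $S=\sum_c\|\bm{m_c}\|_\infty$) into a clean bound against $\sum_c\|\bm{m_c}\|_\pi$ with constant exactly $1+\epsilon$.
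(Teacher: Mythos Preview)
Your plan for part (b) is correct and matches the paper: once (a) is in hand, Lemma~\ref{prop:mixing_dependence} gives $D(\partial) > 1 - \frac{4\epsilon^2}{1+2\epsilon}$, the condition $\epsilon < (1+\sqrt{5})/4$ makes this positive, and dividing yields the claim.

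For part (a), however, your proposal has a real gap. You correctly write
\[
D(\partial)(\rrt-\bm{1}) \;=\; \sum_{c}\bm{m_c} \;+\; (1-D(\partial))\bm{1} \;+\; \sum_{p\ge 2}\sum_{C}\bigcomp_{c\in C}\bm{m_c},
\]
but then you try to bound $|D(\partial)-1|$ and the $p\ge 2$ tail \emph{separately}, invoking Lemma~\ref{prop:mixing_dependence} for the first piece. This double-counts: the lemma already gives $|D(\partial)-1|$ close to $\epsilon\sum_c\|\bm{m_c}\|_\infty$, and the tail contributes another comparable term, so after adding $\sum_c\|\bm{m_c}\|_\pi$ you cannot reach the constant $(1+\epsilon)$ cleanly. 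You then notice this and retreat into the $\|\cdot\|_\infty$ versus $\|\cdot\|_\pi$ bookkeeping, which indeed does not close (in general $\|\bm{x}\|_\infty \le \|\bm{x}\|_\pi$ fails).

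What you are missing is the centralizing inequality (Proposition~\ref{prop:inequality_L2}). Observe directly from Eq.~\ref{eq:likelihood_expansion} and Eq.~\ref{eq:dependence_expansion} that, with $\bm{x} := \sum_{p\ge 1}\sum_{C}\bigcomp_{c\in C}\bm{m_c} = \bigcomp_{c}(\bm{1}+\bm{m_c}) - \bm{1}$, one has $D(\partial)\rrt = \bm{1}+\bm{x}$ and $D(\partial) = 1 + \langle \bm{1},\bm{x}\rangle_\pi$, hence
\[
D(\partial)(\rrt-\bm{1}) \;=\; \bm{x} - \bm{1}\,\langle \bm{1},\bm{x}\rangle_\pi,
\qquad\text{so}\qquad
D(\partial)\|\rrt-\bm{1}\|_\pi \;\le\; \|\bm{x}\|_\pi
\]
by Proposition~\ref{prop:inequality_L2}. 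The term $(1-D(\partial))\bm{1}$ is thus absorbed for free, with no separate estimate needed. It then remains only to bound $\|\bm{x}\|_\pi$, which is done entrywise: for each coordinate $i$,
\[
|x_i| \;\le\; \prod_{c}(1+|m_c^i|) - 1 \;\le\; \Big(1+\tfrac{S_i}{d}\Big)^d - 1 \;<\; (1+\epsilon)S_i,
\qquad S_i := \sum_c |m_c^i|,
\]
using AM--GM and then Lemma~\ref{prop:poly_inequality} (the hypothesis $S_i \le \sum_c\|\bm{m_c}\|_\infty \le 4\epsilon/(1+2\epsilon)$ is exactly what is assumed). Taking $\|\cdot\|_\pi$ of this entrywise bound and the triangle inequality gives $\|\bm{x}\|_\pi < (1+\epsilon)\sum_c\|\bm{m_c}\|_\pi$. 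This is the paper's route; note that Lemma~\ref{prop:poly_inequality} is applied with the coordinatewise sum $S_i$, not with $\sum_c\|\bm{m_c}\|_\infty$ or $\sum_c\|\bm{m_c}\|_\pi$, which sidesteps entirely the norm-comparison difficulty you ran into.
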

\begin{proof}
First of all, since in Eq. \ref{eq:dependence_independence} we stated that $\text{Pr}_\pi(\partial) = D(\partial) \text{Pr}_{IND}(\partial)$, it holds that
\begin{align}  \label{eq:memory_ind}
     D(\partial)\| \rrt - \bm{1}\|_\pi &= \| \frac{\rrr }{\text{Pr}_\pi(\partial)}D(\partial) - \bm{1}D(\partial)\|_\pi = \nonumber\\
   &=   \| \frac{\rrr}{\text{Pr}_{IND}(\partial)} - \bm{1} D(\partial)\|_\pi
\end{align}

Now we can use Equations \ref{eq:likelihood_expansion} and \ref{eq:dependence_expansion}, where we expanded $\rrr/ \text{Pr}_{IND}(\partial)$ and $D(\partial)$, giving
\begin{align} \label{eq:monster1}
     D(\partial)\| \rrt - \bm{1}\|_\pi &=  \| \bm{1} + \sum_{p \in [d]} \sum_{C \in \binom{[d]}{p}} \bigcomp_{c \in C}   \bm{m_c} - \bm{1} \Big( 1 + \sum_{p \in [d]} \sum_{C \in \binom{[d]}{p}}  \bm{\pi} \cdot \bigcomp_{c \in C}   \bm{m_c} \Big)\|_\pi =  \nonumber\\
     &= \| \sum_{p \in [d]} \sum_{C \in \binom{[d]}{p}}  \bigcomp_{c \in C}   \bm{m_c} - \bm{1} \Big(\bm{\pi} \cdot \sum_{p \in [d]} \sum_{C \in \binom{[d]}{p}} \bigcomp_{c \in C}   \bm{m_c} \Big) \|_\pi  \nonumber \\ 
     &=  \| \sum_{p \in [d]} \sum_{C \in \binom{[d]}{p}}\bigcomp_{c \in C}   \bm{m_c} - \bm{1} \langle \bm{1}, \sum_{p \in [d]} \sum_{C \in \binom{[d]}{p}} \bigcomp_{c \in C}   \bm{m_c}  \rangle_\pi \|_\pi.
\end{align}

Using the centralizing inequality of Prop. \ref{prop:inequality_L2}, we obtain
\begin{align}
     D(\partial)\| \rrt - \bm{1}\|_\pi &\leq    \| \sum_{p \in [d]} \sum_{C \in \binom{[d]}{p}}\bigcomp_{c \in C}   \bm{m_c}\|_\pi = \label{eq:monster10} \\
     &= \| \bigcomp_{c \in [d]} (\bm{1} + \bm{m_c}) - \bm{1}\|_\pi \label{eq:clean} = \\
     &= \| \; |\bigcomp_{c \in [d]} (\bm{1} + \bm{m_c}) - \bm{1} |\; \|_\pi \label{eq:cleanABSOLUTE},
\end{align}
where $|\cdot|$ denotes the entrywise absolute value.
 We have the entrywise inequality
 \begin{equation}
     |\bigcomp_{c \in [d]} (\bm{1} + \bm{m_c}) - \bm{1}| \leq \bigcomp_{c \in [d]} (\bm{1} + |\bm{m_c}|) - \bm{1},
 \end{equation}
 and the Arithmetic-Geometric Mean Inequality applied to $\bigcomp_{c \in [d]} (\bm{1} + |\bm{m_c}|)$ implies that, in an entrywise manner,

\[ 0 \leq  \bigcomp_{c \in [d]} (\bm{1} + |\bm{m_c}|) -\bm{1} \leq  \Big( \bm{1} + \frac{\sum_{c \in [d]} \bm{|m^c_\partial}|}{d}\Big)^d - \bm{1},\]
where the exponentiation occurs entrywise. Consequently 
\begin{align}
\label{eq:mon4}
     \| \bigcomp_{c \in [d]} (\bm{1} + \bm{m_c}) - \bm{1}\|_\pi &\leq   \| \Big( \bm{1} + \frac{\sum_{c \in [d]} \bm{|m^c_\partial}|}{d}\Big)^d - \bm{1}\|_\pi.
\end{align}
\begin{sloppypar}
  We write $|\bm{m_c}| = (|m_c^0|, \cdots,  |m_c^K|).$ Lemma \ref{prop:poly_inequality} with $S = \sum_{c \in [d]} |m_c^i|$ for $i \in \mathbb{A}$ further implies that, if we have ${\sum_{c \in [d]}\|\bm{m_c} \|_\infty \leq 4\epsilon/(1 + 2\epsilon)}$, then the following entrywise inequality holds
\end{sloppypar}
\begin{equation} \label{eq:Karound}
     \Big( \bm{1} + \frac{\sum_{c \in [d]} \bm{|m^c_\partial}|}{d}\Big)^d -\bm{1} < (1 + \epsilon)\sum_{c \in [d]} \bm{|m^c_\partial}|.
\end{equation}
Therefore, assuming that $\sum_{c \in [d]}\|\bm{m_c} \|_\infty \leq 4\epsilon/(1 + 2\epsilon)$, and combining Equations \ref{eq:cleanABSOLUTE}, \ref{eq:mon4} and \ref{eq:Karound}, it holds that \begin{align}
\label{eq:mon5}
     D(\partial)\| \rrt - \bm{1}\|_\pi &<  (1 + \epsilon) \| \sum_{c \in [d]} \bm{|m^c_\partial}|  \|_\pi \leq \nonumber\\
     &\leq  (1 + \epsilon)\sum_{c \in [d]} \|  \; |\bm{m_c}|  \; \|_\pi =  (1 + \epsilon) \sum_{c \in [d]} \|  \bm{m_c}  \|_\pi,
\end{align}
where we applied the triangle inequality. This proves item $a)$.

\medskip

\begin{sloppypar}
 To prove item $b)$, we use Prop. \ref{prop:mixing_dependence}, concretely the fact that, if ${\sum_{c \in [d]} \|\bm{m_c} \|_\infty \leq 4\epsilon/(1 + 2\epsilon)}$ and $1 - \frac{4\epsilon^2}{1 + 2\epsilon} > 0$, then
\end{sloppypar}
\begin{equation} \label{eq:dependence_bound}
    D(\partial) > 1 - \frac{4\epsilon^2}{1 + 2\epsilon} > 0.
\end{equation}
It is easy to see that $1 - \frac{4\epsilon^2}{1 + 2\epsilon}>0$ holds when $\epsilon > (1 + \sqrt{5})/4$. The inequality of Eq. \ref{eq:dependence_bound}, applied to Eq. \ref{eq:itema}, yields the result.

 \end{proof}
 
 Theorem \ref{prop:hadamard} does not consider the child nodes $A_c$, but only the parent nodes $P_cA_c$. Moreover, we did not mention how to make the memory vectors at $P_cA_c$ small enough to satisfy the assumption of Theorem \ref{prop:hadamard}. These particularities are considered in Theorem \ref{prop:rootByChildren}, which is our core result bounding the information flow on trees from children to parent node. Intuitively, one expects that information flows somehow decreasingly from the leaves towards the root. Theorem \ref{prop:rootByChildren} formalizes this intuition, sub-additively bounding the memory-vector norm at the root $R$ by the sum of norms at its child nodes $A_c$.

\begin{theorem}[Root-Children upper bounds of the memory vector] \label{prop:rootByChildren}
 In the broadcasting process of Figure \ref{fig:multitree}, assume that all matrices $P_c$ have the same equilibrium frequency $\bm{\pi}$. Consider moreover the normalized likelihood vector $\rrt$ at $R$ and the normalized likelihood vectors  $\aat$ at child nodes $A_c$. Moreover assume that, for some constants $C_c>0$ and for any normalized likelihood vector $\bm{\tilde{\alpha}}$, we have the bound \[ \|P_c\bm{\tilde{\alpha}}-\bm{1}\|_\pi \leq  C_c \; \|\bm{\tilde{\alpha}}-\bm{1}\|_\pi.\]
 
In these conditions, if  \[\sum_{c \in [d]} C_c  \leq \frac{\min_i \pi_i}{\sqrt{1-\min_i \pi_i}}  \frac{4\epsilon}{1 + 2\epsilon}\] for some $\epsilon>0$, then the following holds:
 \begin{itemize}
     \item[a)] We have the bound \[D(\partial) \| \rrt - \bm{1}\|_\pi < (1+\epsilon) \sum_{c \in [d]}C_c \|\aat - \bm{1} \|_\pi .\]
     \item[b)] If moreover $\epsilon < (1 +\sqrt{5})/4$, then \[\|\rrt - \bm{1}\|_\pi < \frac{1+\epsilon}{1 - \frac{4\epsilon^2}{1 + 2\epsilon}} \sum_{c \in [d]}C_c \|\aat - \bm{1} \|_\pi. \]
 \end{itemize}
 
\end{theorem}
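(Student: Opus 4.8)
The plan is to obtain this statement as a corollary of Theorem \ref{prop:hadamard} by specializing its ``parent-node'' memory vectors to $\bm{m_c} := P_c\aat - \bm{1}$. With this choice the vector $\bigcomp_{c\in[d]}(\bm{1}+\bm{m_c}) = \bigcomp_{c\in[d]}P_c\aat$ equals $\bm{\rho_\partial}/\text{Pr}_{IND}(\partial)$ by Eq. \ref{eq:likelihood_expansion}, hence a positive scalar multiple of $\bm{\rho_\partial}$; normalizing it therefore returns exactly $\rrt$, so applying Theorem \ref{prop:hadamard} in this setting bounds precisely the memory vector $\rrt - \bm{1}$ at the root $R$. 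Moreover, evaluating the assumed bound $\|P_c\bm{\tilde{\alpha}}-\bm{1}\|_\pi \le C_c\|\bm{\tilde{\alpha}}-\bm{1}\|_\pi$ at $\bm{\tilde{\alpha}}=\aat$ gives $\|\bm{m_c}\|_\pi \le C_c\|\aat-\bm{1}\|_\pi$, which is the right-hand side appearing in both items.

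The only substantive step is to verify that the trigger hypothesis of Theorem \ref{prop:hadamard}, namely $\sum_{c\in[d]}\|\bm{m_c}\|_\infty \le 4\epsilon/(1+2\epsilon)$, is implied by the assumed bound on $\sum_{c\in[d]}C_c$. For this I would chain three earlier inequalities: Eq. \ref{eq:entryBYnorm} gives $\|\bm{m_c}\|_\infty \le \|\bm{m_c}\|_\pi/\sqrt{\min_i\pi_i}$; the hypothesis gives $\|\bm{m_c}\|_\pi \le C_c\|\aat-\bm{1}\|_\pi$; and the universal memory-vector bound of Eq. \ref{eq:memory}, valid because $\aat$ is a normalized likelihood vector, gives $\|\aat-\bm{1}\|_\pi \le \sqrt{1/\min_i\pi_i - 1} = \sqrt{1-\min_i\pi_i}/\sqrt{\min_i\pi_i}$. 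Multiplying these and summing over $c$ yields
\[\sum_{c\in[d]}\|\bm{m_c}\|_\infty \le \frac{\sqrt{1-\min_i\pi_i}}{\min_i\pi_i}\sum_{c\in[d]}C_c,\]
and substituting $\sum_{c\in[d]}C_c \le \frac{\min_i\pi_i}{\sqrt{1-\min_i\pi_i}}\cdot\frac{4\epsilon}{1+2\epsilon}$ the prefactor cancels exactly, leaving $4\epsilon/(1+2\epsilon)$, as required.

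With the trigger condition verified, item a) follows from item a) of Theorem \ref{prop:hadamard} together with $\|\bm{m_c}\|_\pi \le C_c\|\aat-\bm{1}\|_\pi$, and item b) follows from item b) of Theorem \ref{prop:hadamard} together with the same bound, the extra assumption $\epsilon<(1+\sqrt{5})/4$ being exactly what makes $1-\frac{4\epsilon^2}{1+2\epsilon}>0$ so that item b) of Theorem \ref{prop:hadamard} is applicable. I do not anticipate a genuine obstacle here: the statement is essentially a repackaging of Theorem \ref{prop:hadamard} in terms of the child nodes $A_c$ rather than the intermediate nodes $P_cA_c$, and the only point requiring care is the bookkeeping in the chain of $L_2(\pi)$-, Euclidean-, and uniform-norm inequalities together with the exact cancellation of the $\pi$-dependent prefactor; one must also remember that Eq. \ref{eq:memory} is being applied to $\aat$, which is legitimate precisely because $\aat\cdot\bm{\pi}=1$.
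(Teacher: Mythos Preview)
Your proposal is correct and mirrors the paper's own proof essentially step for step: both verify the hypothesis of Theorem \ref{prop:hadamard} by chaining $\|\bm{m_c}\|_\infty \le \|\bm{m_c}\|_\pi/\sqrt{\min_i\pi_i}$, the contraction bound $\|\bm{m_c}\|_\pi \le C_c\|\aat-\bm{1}\|_\pi$, and the universal memory-vector bound of Eq.~\ref{eq:memory}, then read off items a) and b) directly from Theorem \ref{prop:hadamard}. Your explicit remark that $\bigcomp_{c\in[d]}(\bm{1}+\bm{m_c})$ normalizes to $\rrt$ is a bit more careful than the paper, which leaves this identification implicit.
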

 \begin{proof}
To apply  Theorem \ref{prop:hadamard}, we need the fulfillment of inequality
\[ {\sum_{c \in [d]} \|P_c\aat-\bm{1} \|_\infty \leq 4\epsilon/(1 + 2\epsilon)}.\]
Since $\|\bm{x} \|_\infty \leq \|\bm{x} \|_\pi/\min_i \sqrt{\pi_i}$ as stated in Eq. \ref{eq:chainNorms}, it holds that
\[\sum_{c \in [d]} \|P_c\aat-\bm{1} \|_\infty  \leq \frac{1}{\min_i \sqrt{\pi_i}} \sum_{c \in [d]} \|P_c\aat-\bm{1} \|_\pi.\]
We bounded the $L_2(\pi)$-norm of any memory vector in Eq. \ref{eq:memory}, implying
\begin{align}
    \sum_{c \in [d]}\|P_c\aat-\bm{1} \|_\pi &\leq \sum_{c \in [d]} C_c \|\aat-\bm{1} \|_\pi \leq \label{eq:sumBoundedC}\\   
    &\leq \sqrt{\frac{1}{\min_i \pi_i} -1} \sum_{c \in [d]} C_c\label{eq:sumBoundedNorm}.
\end{align}
Consequently, to apply Theorem \ref{prop:hadamard} it is enough that
\begin{equation*}
   \frac{1}{\min_i \sqrt{\pi_i}}\sqrt{\frac{1}{\min_i \pi_i} -1} \sum_{c \in [d]} C_c \leq  \frac{4\epsilon}{1 + 2\epsilon},
\end{equation*}
or equivalently

\begin{equation} \label{eq:quasifinal3}
   \sum_{c \in [d]} C_c  \leq \frac{\min_i \pi_i}{\sqrt{1-\min_i \pi_i}}  \frac{4\epsilon}{1 + 2\epsilon}.
\end{equation}

Equations \ref{eq:sumBoundedC} and \ref{eq:quasifinal3}, combined with Theorem \ref{prop:hadamard}, yield the desired results. 

 \end{proof}

\section{Definitions of unsolvability} \label{sec:definitions}
\subsection{Patternwise unsolvability}
In a broadcasting process on a $d$-ary tree, we want to describe when the root state cannot be confidently reconstructed using pattern $\partial$, assuming that the tree has a large number of levels $g$. In general, the channels of the broadcasting process can be different, although in Sections \ref{sec:bounds} and \ref{appn} we focus on the case when all channels are equal to $P$.

\begin{sloppypar}
 If we assume a prior state distribution $\bm{\mu} = (\mu_0, \cdots, \mu_K)$ and the likelihood vector at the root $R$ is $\rrr$, then Bayes' theorem implies that the posterior state distribution is \begin{equation}\bm{r_\partial} = \frac{\rrr \circ \bm{\mu}}{\text{Pr}(\partial)}   =\frac{\rrr \circ \bm{\mu} }{\rrr \cdot \bm{\mu}}.\end{equation} If we write $\bm{r_\partial} = (r_0, \cdots, r_K)$, then the MAP estimate is a state $i\in \mathbb{A}$ such that $r_i = \max_i r_i$, and the MAP reconstruction probability is $\max_i r_i$.
\end{sloppypar}

\begin{definition}
When $\bm{r_\partial} \neq \bm{\mu}$, we say that pattern $\partial$ is \textit{informative}, because the observation of $\partial$ influences the posterior. In contrast, we say that pattern $\partial$ is \textit{uninformative} when $\bm{r_\partial} = \bm{\mu}$.
\end{definition} 

Since ${\bm{r_\partial} =\rrr \circ \bm{\mu} / (\rrr \cdot \bm{\mu}),}$ pattern $\partial$ in uninformative iff $\rrr$ is uniform, which occurs iff $\rrt = \bm{1}$. The next definition states the kind of unsolvability that we will focus on.
\begin{definition} \label{def:unsolvable}
We say that the reconstruction problem is \textit{unsolvable} if every pattern $\partial$ is asymptotically uninformative as $g$ grows, that is, if
\[ \rrt \to \bm{1} \text{ as } g\to \infty.\]
\end{definition}
Since $\bm{\pi} > 0$, unsolvability is equivalent to the condition that, for all patterns $\partial$,
\[ \;\|\rrt - \bm{1}\|_\pi \to 0 \text{ as } g\to \infty. \]
More generally, since all finite-dimensional norms are equivalent (see \cite{MIT}), we can use any norm to measure $\rrt - \bm{1}$, although the $L_2(\pi)$-norm is convenient for our proofs. %One can even use F-norms (non homogeneous norms) such as $\|\rrt - \bm{1}\|^2_\pi$, which tends to zero iff $\|\rrt - \bm{1}\|_\pi$ tends to zero.

When the reconstruction problem is unsolvable, then, as $g \to \infty$, we have ${\bm{r_\partial} \to \bm{\mu}}$ and consequently ${\max_i r_i \to \max_i \mu_i}$, that is, the MAP reconstruction probability is achieved by just considering the prior. In other words, the observed pattern $\partial$ is uninformative about the ancestral root state of the broadcasting process. This is why we use the term "unsolvable", in the sense that we cannot estimate the ancestral root state better than the prior can do.

\subsection{Unsolvability in expectation}
%In Def. \ref{def:unsolvable}, we defined unsolvability using condition $\rrt \to \bm{1}$ as $g$ grows.
Notably, the set of all possible patterns $\Delta_g$ at the leaves of a $g$-level tree grows with $g$. Consequently, the expected value 
\begin{equation}
    \mathbb{E}_\partial[\;\|\rrt - \bm{1}\|_\pi]= \sum_{\partial \in \Delta_g} \Pr (\partial) \|\rrt - \bm{1}\|_\pi
\end{equation} has a growing number of summands as $g \to \infty$. In any case, 
the equivalence between finite-dimensional norms (see \cite{MIT}) implies that all definitions using norms are equivalent.

 \begin{definition} \label{def:unsolExpectation}
 We say that the reconstruction problem is  \textit{unsolvable in expectation} when, for some norm $\| \cdot \|_*$,
\[ \mathbb{E}_\partial[\;\|\rrt - \bm{1}\|_*] \to 0 \text{ as } g\to \infty. \]
 \end{definition}
 
 Unsolvability for $2$-dimensional or highly symmetric channels was studied in \cite{BeatingtheSecondEigenvalue} using the Total Variation (TV) distance.

\begin{definition}[Unsolvability in \cite{BeatingtheSecondEigenvalue}]\label{def:unsolTV}
 We say that the reconstruction problem is  TV-unsolvable if the likelihood vector ${\rrr = (\rho^0_\partial, \cdots, \rho^K_\partial)}$ satisfies, for all ${i \neq j \in \mathbb{A}}$, \[ \sum_{\partial \in \Delta_g} |\rho^i_\partial - \rho^j_\partial| \to 0 \text{ as } g\to \infty.\]
\end{definition}

Now define $\mathbb{E}^\pi_\partial[\| \rrt - 1\|_*]$ as the expected value of $\| \rrt - 1\|_*$ with prior $\bm{\mu} = \bm{\pi}$. Recall that $\bm{\mu} > 0$ by definition and $\bm{\pi} > 0$ due to Prop. \ref{prop:reversible_posi}.a. In Prop. \ref{prop:equivalentDEF}, we prove that Definitions \ref{def:unsolExpectation} and \ref{def:unsolTV} are equivalent.

\begin{prop} \label{prop:equivalentDEF}
The following statements are equivalent:

\begin{itemize}
    \item[a)] The reconstruction problem is unsolvable in expectation.
    \item[b)] The reconstruction problem is unsolvable in expectation assuming prior $\bm{\mu} = \bm{\pi}$.
    \item[c)] The reconstruction problem is TV-unsolvable.
\end{itemize}
\end{prop}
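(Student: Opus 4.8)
The plan is to establish the cycle of implications (b) $\Rightarrow$ (a) $\Rightarrow$ (c) $\Rightarrow$ (b), or alternatively prove each of (a) and (c) equivalent to (b). The link between (a) and (b) is an observation about how the prior enters the posterior: recall that $\bm{r_\partial} = \rrr \circ \bm{\mu}/(\rrr \cdot \bm{\mu})$, so $\bm{r_\partial} = \bm{\mu}$ exactly when $\rrt = \bm{1}$, \emph{independently of which strictly positive prior $\bm{\mu}$ we chose}. Thus the vector $\rrt - \bm{1}$ does not depend on $\bm{\mu}$ at all; only the probability weights $\Pr(\partial)$ in the expectation do. So (a) and (b) differ only in replacing the weights $\Pr_{\bm\mu}(\partial) = \rrr\cdot\bm\mu$ by $\Pr_{\bm\pi}(\partial) = \rrr\cdot\bm\pi$. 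Since $\bm\mu>0$ and $\bm\pi>0$ on a finite alphabet, these two probability measures on $\Delta_g$ are mutually absolutely continuous with Radon--Nikodym derivative $(\rrr\cdot\bm\mu)/(\rrr\cdot\bm\pi)$ bounded above and below by constants $\max_i(\mu_i/\pi_i)$ and $\min_i(\mu_i/\pi_i)$ that do \emph{not} depend on $g$ (because each $\rrr$ is a nonnegative vector, $\rrr\cdot\bm\mu \le \max_i(\mu_i/\pi_i)\,\rrr\cdot\bm\pi$ and similarly below). Hence $\mathbb E^{\bm\mu}_\partial[\|\rrt-\bm1\|_*]$ and $\mathbb E^{\bm\pi}_\partial[\|\rrt-\bm1\|_*]$ are within a fixed multiplicative constant of each other for every $g$, so one tends to $0$ iff the other does. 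This gives (a) $\Leftrightarrow$ (b) for any fixed norm $\|\cdot\|_*$, and by norm equivalence on $\mathbb R^{K+1}$ the particular norm is immaterial.

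For (b) $\Leftrightarrow$ (c), the idea is to exhibit an explicit two-sided comparison, valid for each fixed $g$, between $\mathbb E^{\bm\pi}_\partial[\|\rrt-\bm1\|]$ and the quantities $\sum_{\partial}|\rho^i_\partial-\rho^j_\partial|$ appearing in Definition \ref{def:unsolTV}. Under the stationary prior, $\Pr_\pi(\partial) = \rrr\cdot\bm\pi$, so $\Pr_\pi(\partial)\,\tilde\rho^k_\partial = \rho^k_\partial$ for each coordinate $k$, and therefore
\[
\mathbb E^\pi_\partial\big[\|\rrt-\bm1\|_1\big] = \sum_{\partial\in\Delta_g}\Pr_\pi(\partial)\sum_{k\in\mathbb A}|\tilde\rho^k_\partial - 1| = \sum_{\partial\in\Delta_g}\sum_{k\in\mathbb A}\big|\rho^k_\partial - \Pr_\pi(\partial)\big|.
\]
I would then relate $\sum_k|\rho^k_\partial - \Pr_\pi(\partial)|$ to the pairwise gaps $|\rho^i_\partial-\rho^j_\partial|$: since $\Pr_\pi(\partial)=\sum_k\pi_k\rho^k_\partial$ is a $\bm\pi$-weighted average of the coordinates $\rho^k_\partial$, the centered quantities $\rho^k_\partial-\Pr_\pi(\partial)$ have $\bm\pi$-weighted mean zero, and one can bound $\sum_k|\rho^k_\partial-\Pr_\pi(\partial)|$ both above and below by $\big(\sum_{i\neq j}|\rho^i_\partial-\rho^j_\partial|\big)$ up to constants depending only on $K$ and $\min_i\pi_i$ (for the lower bound write $\rho^i_\partial-\Pr_\pi(\partial)=\sum_j\pi_j(\rho^i_\partial-\rho^j_\partial)$; for the upper bound, pick indices achieving the max and min coordinate and note every pairwise gap is at most the max gap, while $\sum_k|\rho^k_\partial - \Pr_\pi(\partial)| \ge \min_i\pi_i\cdot\max_{i,j}|\rho^i_\partial-\rho^j_\partial|$ suitably). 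Summing over $\partial\in\Delta_g$ preserves these two-sided bounds, so $\mathbb E^\pi_\partial[\|\rrt-\bm1\|_1]\to0$ iff $\sum_{i\neq j}\sum_\partial|\rho^i_\partial-\rho^j_\partial|\to0$, and again norm equivalence lets us pass from $\|\cdot\|_1$ to an arbitrary norm $\|\cdot\|_*$. Combining with (a) $\Leftrightarrow$ (b) closes the chain.

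\textbf{Main obstacle.} The only genuinely delicate point is the comparison in the (b) $\Leftrightarrow$ (c) step between the coordinatewise deviations $|\rho^k_\partial - \Pr_\pi(\partial)|$ and the pairwise differences $|\rho^i_\partial-\rho^j_\partial|$: one must be careful that the constants involved depend only on $K$ and $\bm\pi$ and never on $g$ or on the pattern $\partial$, so that the bounds survive summation over the $g$-dependent index set $\Delta_g$. The absolute-continuity argument for (a) $\Leftrightarrow$ (b) is routine once one notices that the $g$-independence of the Radon--Nikodym bounds follows purely from the nonnegativity of likelihood vectors; no control on the size of $\Delta_g$ is needed because the comparison is made inside each summand before summing.
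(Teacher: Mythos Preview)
Your proposal is correct and follows essentially the same route as the paper: for (a)\,$\Leftrightarrow$\,(b) the paper also compares the weights $\Pr_{\bm\mu}(\partial)=\rrr\cdot\bm\mu$ and $\Pr_{\bm\pi}(\partial)=\rrr\cdot\bm\pi$ by $g$-independent constants, and for (b)\,$\Leftrightarrow$\,(c) the paper uses precisely your identity $\tilde\rho^i_\partial-1=\sum_{j}\pi_j(\tilde\rho^i_\partial-\tilde\rho^j_\partial)$ for one direction and the triangle inequality $|\tilde\rho^i_\partial-\tilde\rho^j_\partial|\le|\tilde\rho^i_\partial-1|+|\tilde\rho^j_\partial-1|$ for the other (in place of your max-gap argument, which is equally valid). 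The only cosmetic difference is that the paper works throughout with the normalized coordinates $\tilde\rho^k_\partial$ rather than passing through your unnormalized identity $\Pr_\pi(\partial)\tilde\rho^k_\partial=\rho^k_\partial$.
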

\begin{proof}
\hfill
\begin{itemize}
    \item $a) \iff b)$: Since $\text{Pr}_\pi (\partial) = \bm{\pi} \cdot \rrt$ and $\text{Pr} (\partial) = \bm{\mu} \cdot \rrt$, it holds that
\begin{equation}
     \text{Pr} (\partial) \min_i \pi_i  \leq 
     \text{Pr}_\pi (\partial) \leq 
     \frac{\text{Pr} (\partial)}{\min_i \mu_i},
\end{equation}
and consequently
\begin{equation}
    \mathbb{E}_\partial[\| \rrt - 1\|_*] \min_i \pi_i \leq  \mathbb{E}^\pi_\partial[\| \rrt - 1\|_*] \leq
    \frac{\mathbb{E}_\partial[\| \rrt - 1\|_*]}{\min_i \mu_i}.
\end{equation}
These chain of inequalities implies that $\mathbb{E}_\partial[\| \rrt - 1\|_*] \to 0$ iff $\mathbb{E}^\pi_\partial[\| \rrt - 1\|_*]\to 0$, as desired. Note that we can chose any prior as long as it has positive entries.

 \item $b) \iff c)$: If we write $\rrt = (\tilde{\rho}^0_\partial, \cdots, \tilde{\rho}^K_\partial)$, we can do 
\begin{align}
    \sum_{\partial \in \Delta_g} |\rho^i_\partial - \rho^j_\partial| =  \sum_{\partial \in \Delta_g} \text{Pr}_\pi(\partial) |\tilde{\rho}^i_\partial - \tilde{\rho}^j_\partial| = \mathbb{E}^\pi_\partial[|\tilde{\rho}^i_\partial - \tilde{\rho}^j_\partial|]
\end{align}

We have the inequality \begin{equation}
    |\tilde{\rho}^i_\partial - \tilde{\rho}^j_\partial| \leq |\tilde{\rho}^i_\partial-1| + |\tilde{\rho}^j_\partial-1|,
\end{equation}
\begin{sloppypar}
 and consequently unsolvability in expectation (using the $L_1$-norm) implies TV-unsolvability. Conversely, since $\bm{\pi}$ is a distribution and $\bm{\pi} \cdot \rrt = 1,$ it holds that
\end{sloppypar}
\begin{align}
    |\tilde{\rho}_\partial^i - 1| &= |\sum_j \tilde{\rho}_\partial^i \pi_j - \sum_j \tilde{\rho}_\partial^j \pi_j| = \nonumber\\
    &= |\sum_{j\neq i}  \pi_j(\tilde{\rho}_\partial^i - \tilde{\rho}_\partial^j)| \leq \sum_{j\neq i} \pi_j|(\tilde{\rho}_\partial^i - \tilde{\rho}_\partial^j)|.
\end{align}
Thus if $\mathbb{E}^\pi_\partial[|\tilde{\rho}^i_\partial - \tilde{\rho}^j_\partial|] \to 0$ for all $i \neq j$ then $\mathbb{E}_\partial[\| \rrt - 1\|_1] \to 0$, as desired. 
\end{itemize}

\end{proof}

\bigskip

\section{Bounds to the unsolvability problem} \label{sec:bounds}
Theorem \ref{prop:rootByChildren} is very general, because the only tree structure required is a parent node with $d$ children. By applying Theorem \ref{prop:rootByChildren} recursively, for any tree we can bound the memory vector at the root by the sum of the memory-vector norms at the leaves. Applying this strategy, we can find sufficient conditions for the unsolvability of the reconstruction problem.

\begin{theorem}\label{prop:unsolvable} Given any irreducible and aperiodic Markov matrix $P$, assume that, for some $C>0$ and for any normalized likelihood vector $\bm{\tilde{\alpha}}$, we have the bound \[ \|P\bm{\tilde{\alpha}}-\bm{1}\|_\pi \leq  C \; \|\bm{\tilde{\alpha}}-\bm{1}\|_\pi.\]
Then, on a $d$-ary tree, the reconstruction problem using channel $P$ is unsolvable if \[ C d < \min \Big\{ \frac{1}{3}, \frac{\min_i \pi_i}{\sqrt{1-\min_i \pi_i}} \Big\}.\]
\end{theorem}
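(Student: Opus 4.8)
The plan is to apply Theorem~\ref{prop:rootByChildren} recursively, stripping one level off the tree at a time from the leaves toward the root, with the free parameter fixed at $\epsilon = 1/2$. First I would record the elementary facts that make this choice fit: $1/2 < (1+\sqrt{5})/4$, so item b) of Theorem~\ref{prop:rootByChildren} applies; $\frac{4\epsilon}{1+2\epsilon} = 1$; and $\frac{1+\epsilon}{1-4\epsilon^2/(1+2\epsilon)} = 3$. Since all channels equal $P$, the hypothesis $\|P\bm{\tilde{\alpha}}-\bm{1}\|_\pi \le C\|\bm{\tilde{\alpha}}-\bm{1}\|_\pi$ is precisely the bound Theorem~\ref{prop:rootByChildren} requires with every $C_c = C$. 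Any internal node has $d' \le d$ children, hence $\sum_c C_c = Cd' \le Cd$, and the assumption $Cd < \frac{\min_i \pi_i}{\sqrt{1-\min_i \pi_i}}$ is exactly $Cd < \frac{\min_i \pi_i}{\sqrt{1-\min_i \pi_i}}\cdot\frac{4\epsilon}{1+2\epsilon}$, so the sum condition of Theorem~\ref{prop:rootByChildren}(b) holds simultaneously at every internal node.

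Next I would set, for each integer $\ell \ge 0$,
\[
M_\ell := \sup \big\| \bm{\tilde{\alpha}} - \bm{1} \big\|_\pi ,
\]
the supremum being over all nodes whose rooted subtree has $\ell$ levels, over all subpatterns at the leaves of that subtree, with $\bm{\tilde{\alpha}}$ the corresponding normalized likelihood vector; this quantity does not depend on $g$. Equation~\ref{eq:memory} bounds the $L_2(\pi)$-norm of every memory vector by $\sqrt{1/\min_i \pi_i - 1}$, so each $M_\ell$, in particular $M_0$, is finite. For the inductive step, fix a node $R$ whose subtree has $\ell \ge 1$ levels and whose $d' \le d$ children $A_1, \dots, A_{d'}$ each root a subtree with exactly $\ell - 1$ levels (all leaves lie at a common depth). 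Theorem~\ref{prop:rootByChildren}(b) with $\epsilon = 1/2$ then gives
\[
\big\| \rrt - \bm{1} \big\|_\pi < 3 \sum_{c \in [d']} C\, \big\| \bm{\tilde{\alpha}^c} - \bm{1} \big\|_\pi \le 3 C d'\, M_{\ell-1} \le 3 C d\, M_{\ell-1},
\]
and taking the supremum over all such $R$ and subpatterns yields the recursion $M_\ell \le 3 C d\, M_{\ell-1}$.

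Iterating gives $M_g \le (3Cd)^g M_0$ for the root of a $g$-level tree. The last hypothesis, $Cd < 1/3$, makes $3Cd < 1$, so $M_g \to 0$ as $g \to \infty$; thus $\|\rrt - \bm{1}\|_\pi \to 0$ for every pattern $\partial$, which by Definition~\ref{def:unsolvable} (using that $\bm{\pi} > 0$, so $\rrt \to \bm{1}$ is equivalent to $\|\rrt - \bm{1}\|_\pi \to 0$) is exactly the asserted unsolvability. This also explains the two terms in the minimum: $\frac{\min_i \pi_i}{\sqrt{1-\min_i \pi_i}}$ is what lets Theorem~\ref{prop:rootByChildren} fire at every level, and $1/3$ is what drives the per-level contraction factor $3Cd$ below one. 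I do not anticipate a real analytic obstacle; the only points needing care are the level-indexed bookkeeping of $M_\ell$ and the handling of nodes with fewer than $d$ children, both dealt with above via monotonicity of the sum condition in the number of children.
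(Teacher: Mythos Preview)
Your proposal is correct and follows essentially the same route as the paper: recursively apply Theorem~\ref{prop:rootByChildren}(b) level by level with $\epsilon = 1/2$, so that the applicability condition becomes $Cd \le \frac{\min_i \pi_i}{\sqrt{1-\min_i \pi_i}}$ and the per-level factor is $3Cd$, driven below~$1$ by $Cd < 1/3$. The only cosmetic differences are that the paper keeps $\epsilon$ generic until the very end and unrolls the recursion to a sum over the at most $d^g$ leaves (then caps each leaf term via Eq.~\ref{eq:memory}), whereas you package the same bound into the level-wise supremum $M_\ell$; both yield the decay $(3Cd)^g$ times a fixed constant.
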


\begin{proof}
 We want to apply Theorem \ref{prop:rootByChildren} with all constants $C_c$ equal to $C$. First note that, for any upper bound $U$, if the assumption $dC \leq U$  holds, then also $kC \leq U$ holds for any number of children $k \leq d$. Therefore, assuming $\epsilon < (1 +\sqrt{5})/4$ and \begin{equation} \label{eq:boundagain}
 d C  \leq \frac{\min_i \pi_i}{\sqrt{1-\min_i \pi_i}}  \frac{4\epsilon}{1 + 2\epsilon}\end{equation} as required by Theorem \ref{prop:rootByChildren}.b, then for any node with $k \leq d$ children we can apply the bound 
\begin{equation}\label{eq:boundkchildren}
\|\rrt - 1\|_\pi < \frac{1+\epsilon}{1 - \frac{4\epsilon^2}{1 + 2\epsilon}}C \sum_{c \in [k]}\|\aat - \bm{1} \|_\pi.\end{equation}
\begin{sloppypar}
 Consider the normalized likelihood vectors at the leaves, from now on $\llt$ where ${c \in \{ 1, \cdots, N\}}$ and $N \leq d^g$. The recursive application of Eq. \ref{eq:boundkchildren} yields that
\end{sloppypar}
\begin{align} \|\rrt - 1\|_\pi &< (\frac{1+\epsilon}{1 - \frac{4\epsilon^2}{1 + 2\epsilon}}C)^g \sum_{c \in [N]}\|\llt - \bm{1} \|_\pi,
\end{align}
while the upper bound of the norm of any memory vector (Eq. \ref{eq:memory}) gives
\begin{align}
\|\rrt - 1\|_\pi &< (\frac{1+\epsilon}{1 - \frac{4\epsilon^2}{1 + 2\epsilon}}C)^g \sum_{c \in [N]} \sqrt{\frac{1}{\min_i \pi_i} - 1} \leq
\nonumber\\ &\leq (\frac{1+\epsilon}{1 - \frac{4\epsilon^2}{1 + 2\epsilon}}C)^g d^g \sqrt{\frac{1}{\min_i \pi_i} - 1}. \label{eq:boundglevel}
\end{align}

Recall that the reconstruction problem is unsolvable iff $\|\rrt - 1\|_\pi$ tends to $0$ as $g$ grows. Thus the reconstruction problem is unsolvable if Eq. \ref{eq:boundagain} holds and
\begin{equation} \label{eq:boundglevel2}
    \frac{1+\epsilon}{1 - \frac{4\epsilon^2}{1 + 2\epsilon}}C d < 1,
\end{equation}
as implied by the bound of Eq. \ref{eq:boundglevel}.   To make the following bounds hold simultaneously,

\begin{align}
    C d &< \frac{1 - \frac{4\epsilon^2}{1 + 2\epsilon}}{1+\epsilon} \nonumber\\
    C d &\leq \frac{\min_i \pi_i}{\sqrt{1-\min_i \pi_i}}  \frac{4\epsilon}{1 + 2\epsilon},
\end{align}

the optimal $\epsilon< (1 +\sqrt{5})/4$ is the one making the two bounds coincide, although this intersection depends on $\min_i \pi_i$. For the sake of clarity, set $\epsilon = 1/2$, yielding the result.

\end{proof}

Theorem \ref{prop:unsolvable} assumes that any memory vector decreases at least by a factor of $C$ under the action of $P$. We have inferred such bounds in Equations \ref{eq:bound_memoryvector} and \ref{eq:boundSingularvalue}, giving the following corollaries.

\begin{cor}
Consider any irreducible and aperiodic Markov matrix $P$ with largest singular value $\sigma_1$. Then, on a $d$-ary tree, the reconstruction problem using channel $P$ is unsolvable if \[ \sigma_1 d < \frac{\min_i \sqrt{\pi_i}}{\max_i \sqrt{\pi_i}} \min \Big\{ \frac{1}{3}, \frac{\min_i \pi_i}{\sqrt{1-\min_i \pi_i}} \Big\}.\]
\end{cor}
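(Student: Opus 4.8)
The plan is to derive this corollary directly from Theorem~\ref{prop:unsolvable} by supplying the specific constant $C$ that governs the contraction of the memory vector for an arbitrary irreducible, aperiodic channel. Recall that in Eq.~\ref{eq:boundSingularvalue} it was shown that, for any normalized likelihood vector $\bm{\tilde{\alpha}}$,
\[ \|P\bm{\tilde{\alpha}}-\bm{1}\|_\pi \leq \frac{\max_i \sqrt{\pi_i}}{\min_i \sqrt{\pi_i}}\,\sigma_1\,\|\bm{\tilde{\alpha}}-\bm{1}\|_\pi, \]
so the hypothesis of Theorem~\ref{prop:unsolvable} is met with $C = \sigma_1 \max_i \sqrt{\pi_i}/\min_i \sqrt{\pi_i}$. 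This is the only ingredient needed; the rest is substitution and rearrangement.

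First I would invoke Theorem~\ref{prop:unsolvable} with this choice of $C$: the reconstruction problem on the $d$-ary tree is unsolvable whenever
\[ \frac{\max_i \sqrt{\pi_i}}{\min_i \sqrt{\pi_i}}\,\sigma_1\, d < \min\Big\{\tfrac{1}{3},\ \tfrac{\min_i \pi_i}{\sqrt{1-\min_i \pi_i}}\Big\}. \]
Then I would divide both sides by the positive factor $\max_i \sqrt{\pi_i}/\min_i \sqrt{\pi_i}$ to isolate $\sigma_1 d$ on the left, obtaining exactly
\[ \sigma_1 d < \frac{\min_i \sqrt{\pi_i}}{\max_i \sqrt{\pi_i}} \min\Big\{\tfrac{1}{3},\ \tfrac{\min_i \pi_i}{\sqrt{1-\min_i \pi_i}}\Big\}, \]
which is the claimed condition. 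Since $\bm{\pi} > 0$ by Prop.~\ref{prop:reversible_posi}.a, every quantity here is well defined and strictly positive, so the division is legitimate.

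There is essentially no obstacle: the corollary is a transparent specialization of Theorem~\ref{prop:unsolvable} using the singular-value bound already established in Section~\ref{sec:generalProperties}. The only point worth a sentence of care is noting that $\sigma_1 \geq |\theta_1|$ may exceed $1$, so the bound is vacuous unless $\min_i \sqrt{\pi_i}/\max_i \sqrt{\pi_i}$ compensates; but this is a remark on sharpness, not a gap in the argument. I would therefore present the proof in two or three lines, citing Eq.~\ref{eq:boundSingularvalue} for the value of $C$ and Theorem~\ref{prop:unsolvable} for the conclusion.
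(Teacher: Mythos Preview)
Your proposal is correct and matches the paper's approach exactly: the paper simply remarks that Equations~\ref{eq:bound_memoryvector} and~\ref{eq:boundSingularvalue} supply the constant $C$ required by Theorem~\ref{prop:unsolvable}, and states the corollary without further proof. Your two-line derivation, plugging $C=\sigma_1\max_i\sqrt{\pi_i}/\min_i\sqrt{\pi_i}$ into Theorem~\ref{prop:unsolvable} and rearranging, is precisely what the paper intends.
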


\begin{cor} \label{cor:boundREVERSIBLE}
Consider any irreducible, aperiodic and reversible Markov matrix $P$ with non-unitary absolutely largest eigenvalue $\theta_1$. Then, on a $d$-ary tree, the reconstruction problem using channel $P$ is unsolvable if \[ |\theta_1| d < \min \Big\{ \frac{1}{3}, \frac{\min_i \pi_i}{\sqrt{1-\min_i \pi_i}} \Big\}.\]
\end{cor}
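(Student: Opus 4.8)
The plan is to derive this corollary as a direct consequence of Theorem~\ref{prop:unsolvable} by exhibiting the constant $C$ that governs the contraction of memory vectors under a reversible channel. The earlier part of the excerpt has already done essentially all the work: in Eq.~\ref{eq:bound_memoryvector} it was shown that, whenever $P$ is reversible, the eigendecomposition of Prop.~\ref{prop:reversible_posi}.c gives $\|P\bm{\tilde{\alpha}}-\bm{1}\|_\pi \leq |\theta_1|\,\|\bm{\tilde{\alpha}}-\bm{1}\|_\pi$ for every normalized likelihood vector $\bm{\tilde{\alpha}}$. So the first step is simply to invoke that bound with $C = |\theta_1|$.

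The second and final step is to feed $C = |\theta_1|$ into Theorem~\ref{prop:unsolvable}: since the hypothesis $\|P\bm{\tilde{\alpha}}-\bm{1}\|_\pi \leq C\,\|\bm{\tilde{\alpha}}-\bm{1}\|_\pi$ is met with $C = |\theta_1|$, the theorem yields that the reconstruction problem on a $d$-ary tree using channel $P$ is unsolvable whenever
\[
|\theta_1|\, d < \min\Big\{ \tfrac{1}{3}, \tfrac{\min_i \pi_i}{\sqrt{1-\min_i \pi_i}} \Big\},
\]
which is exactly the claimed inequality. One should note that Prop.~\ref{prop:reversible_posi}.b guarantees $|\theta_1| < 1$ and that $P$ being reversible, irreducible and aperiodic puts us squarely in the setting of Prop.~\ref{prop:reversible_posi}.c, so the eigendecomposition used to obtain Eq.~\ref{eq:bound_memoryvector} is available.

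There is essentially no obstacle here; the content of the corollary is entirely in the statement "reversible channels contract memory vectors by at least $|\theta_1|$", and that was established before Theorem~\ref{prop:unsolvable}. The only point worth a sentence of care is making explicit that the contraction bound holds for \emph{every} normalized likelihood vector and not just for some particular $\bm{\tilde{\alpha}}$ — but this is immediate from the derivation of Eq.~\ref{eq:bound_memoryvector1}, where the inequality $\sum_k \theta_k^2 (\bm{\tilde{\alpha}}\cdot\bm{h_k})^2 \leq \theta_1^2 \sum_k (\bm{\tilde{\alpha}}\cdot\bm{h_k})^2$ uses only $|\theta_k| \leq |\theta_1|$ and holds for arbitrary coefficients $(\bm{\tilde{\alpha}}\cdot\bm{h_k})_k$. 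Hence the proof is a two-line citation chain, and the corollary requires no new estimates.
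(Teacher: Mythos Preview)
Your proposal is correct and matches the paper's approach exactly: the paper obtains this corollary directly from Theorem~\ref{prop:unsolvable} by taking $C=|\theta_1|$, which is justified by the reversible-case contraction bound of Eq.~\ref{eq:bound_memoryvector}. There is nothing to add.
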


%The choice of $\epsilon$ can be made more appropriate if we know more about $d$ and ${\min_i \pi_i \leq 1/|\mathbb{A}|}$. For example, we can choose $\epsilon = 2/3$, and the sufficient condition would be \[|\theta_1| d < \min \Big\{ \frac{1}{7}, \frac{8}{7} \frac{\min_i \pi_i}{ \sqrt{1-\min_i \pi_i}} \Big\}.\]
In this section we only assumed that the tree is $d$-ary, but Theorem \ref{prop:rootByChildren} is very general and can be applied in multiple scenarios. We can, for example, allow a different channel for every child node as long as $\bm{\pi}$ is the equilibrium distribution. To get a tighter bound, one can also apply Theorem \ref{prop:rootByChildren} using the exact number of children at every node, and not an upper bound $d$ as we did here.

\section{A bound of unsolvability in expectation}\label{appn} %% if no title is needed, leave empty \section*{}.
Clearly, if the reconstruction problem is unsolvable, then it is unsolvable in expectation. Moreover, there is room to slightly improve our bounds when dealing with unsolvability in expectation. First we can state the analogous of Theorem \ref{prop:rootByChildren}.
\begin{prop} \label{prop:rootByChildrenEXPECTATION}
 In the broadcasting process of Figure \ref{fig:multitree}, , assume that all matrices $P_c$ have the same equilibrium frequency $\bm{\pi}$. We consider for each pattern $\partial$ the normalized likelihood vector $\rrt$ at $R$ and the normalized likelihood vectors $\aap$ at child nodes $A_c$. Moreover assume that, for some constants $C_c>0$ and for any normalized likelihood vector $\bm{\tilde{\alpha}}$, we have the bound \[ \|P_c\bm{\tilde{\alpha}}-\bm{1}\|_\pi \leq  C_c \; \|\bm{\tilde{\alpha}}-\bm{1}\|_\pi.\]
 
In these conditions, if  \[\sum_{c \in [d]} C_c  \leq \frac{\min_i \pi_i}{\sqrt{1-\min_i \pi_i}}  \frac{4\epsilon}{1 + 2\epsilon}\] for some $\epsilon>0$, then we have the bound \[\mathbb{E}^\pi_\partial[\| \rrt - 1\|_\pi] < (1+\epsilon) \sum_{c \in [d]}C_c \mathbb{E}^\pi_{\partial_c}[\|\aap - \bm{1} \|_\pi].\]
\end{prop}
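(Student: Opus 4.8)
The plan is to mirror the proof of Theorem \ref{prop:rootByChildren} almost verbatim, replacing the pointwise inequality by its expectation. The starting point is item $a)$ of Theorem \ref{prop:hadamard}, which, for a fixed pattern $\partial$ satisfying the hypothesis $\sum_{c\in[d]}\|\bm{m_c}\|_\infty\le 4\epsilon/(1+2\epsilon)$ with $\bm{m_c}=P_c\aat-\bm{1}$, gives $D(\partial)\|\rrt-\bm{1}\|_\pi<(1+\epsilon)\sum_{c\in[d]}\|\bm{m_c}\|_\pi$. As in the proof of Theorem \ref{prop:rootByChildren}, the bound $\|P_c\aat-\bm{1}\|_\pi\le C_c\|\aat-\bm{1}\|_\pi$ together with Eq.~\ref{eq:memory} and the norm comparison $\|\bm{x}\|_\infty\le\|\bm{x}\|_\pi/\min_i\sqrt{\pi_i}$ shows that the condition $\sum_{c\in[d]}C_c\le \frac{\min_i\pi_i}{\sqrt{1-\min_i\pi_i}}\frac{4\epsilon}{1+2\epsilon}$ forces the hypothesis of Theorem \ref{prop:hadamard} to hold for \emph{every} pattern $\partial$, so the pointwise inequality
\[
D(\partial)\|\rrt-\bm{1}\|_\pi < (1+\epsilon)\sum_{c\in[d]}C_c\|\aat-\bm{1}\|_\pi
\]
is valid for all $\partial$.

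Next I would multiply this inequality by $\text{Pr}_{IND}(\partial)=\prod_{c\in[d]}\text{Pr}_\pi(\partial_c)$ and use $\text{Pr}_\pi(\partial)=D(\partial)\text{Pr}_{IND}(\partial)$ from Eq.~\ref{eq:dependence_independence}, turning the left side into $\text{Pr}_\pi(\partial)\|\rrt-\bm{1}\|_\pi$. Summing over $\partial\in\Delta_g$ then produces $\mathbb{E}^\pi_\partial[\|\rrt-\bm{1}\|_\pi]$ on the left. On the right we get $(1+\epsilon)\sum_{c\in[d]}C_c\sum_{\partial}\big(\prod_{c'\in[d]}\text{Pr}_\pi(\partial_{c'})\big)\|\aat-\bm{1}\|_\pi$. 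The key observation is that each pattern $\partial$ factors as the tuple $(\partial_1,\dots,\partial_d)$ of independent subpatterns under the reference measure $\text{Pr}_\pi$, and the summand for index $c$ depends on $\partial$ only through $\partial_c$; so summing out all $\partial_{c'}$ with $c'\ne c$ uses $\sum_{\partial_{c'}}\text{Pr}_\pi(\partial_{c'})=1$ and leaves exactly $\sum_{\partial_c}\text{Pr}_\pi(\partial_c)\|\aat-\bm{1}\|_\pi=\mathbb{E}^\pi_{\partial_c}[\|\aap-\bm{1}\|_\pi]$. This yields precisely the claimed bound.

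The main subtlety — and the step I would be most careful about — is the Fubini/factorization argument: it requires that, under the prior $\bm{\mu}=\bm{\pi}$, the subpatterns $\partial_1,\dots,\partial_d$ are genuinely independent with $\text{Pr}_\pi(\partial)=\prod_c\text{Pr}_\pi(\partial_c)$, which is \emph{not} the case — the true joint law carries the dependence factor $D(\partial)$. The trick that makes it work is that multiplying through by $\text{Pr}_{IND}(\partial)$ rather than $\text{Pr}_\pi(\partial)$ is exactly what cancels $D(\partial)$ on the left \emph{and} installs the product measure on the right, so the independence being exploited is that of the artificial measure $\text{Pr}_{IND}$, under which the factorization is a tautology. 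I would also note that $\Delta_g$ ranges over all leaf-patterns of the $g$-level tree and that the index set over which $\partial_c$ ranges is the set of subpatterns of the clade $A_c$; the sums are finite so there are no convergence issues. Everything else (the norm comparisons, the invocation of Theorem \ref{prop:hadamard}.$a)$, the bound $\|\aat-\bm{1}\|_\pi\le\sqrt{1/\min_i\pi_i-1}$) is routine and identical to the non-expectation case.
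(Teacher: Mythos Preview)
Your proposal is correct and follows essentially the same route as the paper's proof: start from the pointwise inequality of Theorem~\ref{prop:rootByChildren}.$a)$ (which you rederive via Theorem~\ref{prop:hadamard}.$a)$), multiply by $\text{Pr}_{IND}(\partial)$ so that $D(\partial)\text{Pr}_{IND}(\partial)=\text{Pr}_\pi(\partial)$ on the left, sum over all patterns, and then factor the right-hand side using $\text{Pr}_{IND}(\partial)=\prod_f \text{Pr}_\pi(\partial_f)$ together with the fact that $\aap$ depends only on $\partial_c$. Your commentary on the ``trick'' --- that the factorization is with respect to the artificial product measure $\text{Pr}_{IND}$, not the true law $\text{Pr}_\pi$ --- is exactly the point the paper is exploiting.
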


\begin{proof}
We can apply Theorem \ref{prop:rootByChildren}.a, because the assumptions are identical. Multiplying both sides by $\text{Pr}_{IND}(\partial)$, we get
\begin{equation}
    \text{Pr}_\pi(\partial)\| \rrt - \bm{1}\|_\pi < (1+\epsilon) \sum_{c \in [d]}C_c \text{Pr}_{IND}(\partial)\|\aap - \bm{1} \|_\pi
\end{equation}
Recall that pattern $\partial$ is composed by subpatterns $\partial_1, \cdots, \partial_d$, and thus indexation using $\partial$ is equivalent to indexation using $\partial_1, \cdots, \partial_d$. Moreover, vector $\aap$ depends only on $\partial_c$, so we can write $\aap = \aapp$. Summing over all $\partial$, it follows that
\begin{align}
    \mathbb{E}^\pi_\partial[\| \rrt - \bm{1}\|_\pi &< (1+\epsilon) \sum_{c \in [d]} C_c \sum_{\partial_1, \cdots, \partial_d} \text{Pr}_{IND}(\partial)\|\aapp - \bm{1}\|_\pi = \nonumber\\
    &= (1+\epsilon) \sum_{c \in [d]} C_c  \sum_{\partial_c}\text{Pr}_\pi(\partial_c)\|\aapp - \bm{1}\|_\pi  \prod_{f\in [d]-\{c\}} \sum_{\partial_f} \text{Pr}_\pi(\partial_f),
\end{align}
where we substituted $\text{Pr}_{IND}(\partial) = \prod_{f\in [d]}\text{Pr}_\pi(\partial_f)$ and rearranged the summands. Since ${\sum_{\partial_f} \text{Pr}_\pi(\partial_f) = 1}$ for all $f \in [d]$, we get the desired inequality
\begin{align}
    \mathbb{E}^\pi_\partial[\| \rrt - \bm{1}\|_\pi] &< (1+\epsilon) \sum_{c \in [d]} C_c  \sum_{\partial_c}\text{Pr}_\pi(\partial_c)\|\aapp - \bm{1}\|_\pi \nonumber\\ &= (1+\epsilon) \sum_{c \in [d]} C_c  \mathbb{E}_{\partial_c}[\|\aapp - \bm{1}\|_\pi].
\end{align}
\end{proof}

Using Proposition \ref{prop:rootByChildrenEXPECTATION}, we can improve the unsolvability bounds of Section \ref{sec:bounds}.
\begin{prop}
Given any irreducible and aperiodic Markov matrix $P$, assume that, for some $C>0$ and for any normalized likelihood vector $\bm{\tilde{\alpha}}$, we have the bound \[ \|P\bm{\tilde{\alpha}}-\bm{1}\|_\pi \leq  C \; \|\bm{\tilde{\alpha}}-\bm{1}\|_\pi.\]
Then, on a $d$-ary tree, the reconstruction problem using channel $P$ is unsolvable in expectation if \[ C d < \min \Big\{ \frac{1}{3}, \frac{8}{5}\frac{\min_i \pi_i}{\sqrt{1-\min_i \pi_i}} \Big\}.\]
\end{prop}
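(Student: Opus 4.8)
The plan is to mirror the proof of Theorem \ref{prop:unsolvable}, but to replace the patternwise bound of Theorem \ref{prop:rootByChildren}.b by the weaker-looking (yet sufficient in expectation) bound of Proposition \ref{prop:rootByChildrenEXPECTATION}, which carries the factor $(1+\epsilon)$ instead of $(1+\epsilon)/(1-4\epsilon^2/(1+2\epsilon))$. Concretely, for any node with $k \le d$ children, provided $\epsilon>0$ satisfies $dC \le \frac{\min_i \pi_i}{\sqrt{1-\min_i\pi_i}}\frac{4\epsilon}{1+2\epsilon}$, Proposition \ref{prop:rootByChildrenEXPECTATION} (applied with all $C_c = C$ and noting, as in Theorem \ref{prop:unsolvable}, that $dC \le U$ forces $kC \le U$) yields
\[ \mathbb{E}^\pi_\partial[\|\rrt - \bm{1}\|_\pi] < (1+\epsilon)\, C \sum_{c \in [k]} \mathbb{E}^\pi_{\partial_c}[\|\aapp - \bm{1}\|_\pi]. \]

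Next I would iterate this inequality down the tree, level by level. Since the tree is $d$-ary, after $g$ applications we reach the leaves, and the normalized likelihood vector at a leaf where state $i$ was observed has memory vector of $L_2(\pi)$-norm bounded by $\sqrt{1/\min_i\pi_i - 1}$ by Eq. \ref{eq:memory}; in fact the memory vector at a leaf is deterministic given the observed leaf state, so its expectation is bounded by the same quantity. With at most $d^g$ leaves, recursion gives
\[ \mathbb{E}^\pi_\partial[\|\rrt-\bm{1}\|_\pi] < \big((1+\epsilon)C\big)^g\, d^g\, \sqrt{\tfrac{1}{\min_i\pi_i}-1}, \]
exactly as in Eq. \ref{eq:boundglevel} but with the cleaner constant $(1+\epsilon)C$ in place of $\frac{1+\epsilon}{1-4\epsilon^2/(1+2\epsilon)}C$. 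This expectation tends to $0$ as $g\to\infty$ — hence the reconstruction problem is unsolvable in expectation (and by Proposition \ref{prop:equivalentDEF} also TV-unsolvable) — provided $(1+\epsilon)Cd < 1$ together with the feasibility constraint $Cd \le \frac{\min_i\pi_i}{\sqrt{1-\min_i\pi_i}}\frac{4\epsilon}{1+2\epsilon}$.

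The remaining step is the optimization over $\epsilon$, which is where the improvement over Theorem \ref{prop:unsolvable} comes from: here the unsolvability constraint is the simpler $Cd < 1/(1+\epsilon)$ rather than $Cd < (1-4\epsilon^2/(1+2\epsilon))/(1+\epsilon)$, so one may push $\epsilon$ smaller. Taking $\epsilon \to 0$ in the first constraint would give $Cd<1$ but would collapse the feasibility constraint; the honest statement needs a fixed $\epsilon$. Choosing, say, $\epsilon = 1/2$ makes the first constraint read $Cd < 2/3$ and the second read $Cd \le \frac{\min_i\pi_i}{\sqrt{1-\min_i\pi_i}}\cdot\frac{2}{2} = \frac{\min_i\pi_i}{\sqrt{1-\min_i\pi_i}}\cdot\frac{8}{5}\cdot\frac{5}{8}$; I would verify the precise arithmetic so that the clean form $Cd < \min\{1/3,\ \frac{8}{5}\frac{\min_i\pi_i}{\sqrt{1-\min_i\pi_i}}\}$ in the statement follows (the $1/3$ being a convenient sufficient slack below $2/3$, and the $8/5$ coming from $\frac{4\epsilon}{1+2\epsilon}$ evaluated at the chosen $\epsilon$). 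The only mild subtlety — and the main thing to be careful about — is that the feasibility hypothesis of Proposition \ref{prop:rootByChildrenEXPECTATION} must hold at \emph{every} internal node simultaneously, which is handled exactly as in Theorem \ref{prop:unsolvable} by the monotonicity remark $kC \le dC \le U$ for $k \le d$; everything else is the routine geometric-decay argument already carried out there.
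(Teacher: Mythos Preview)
Your overall strategy is exactly the paper's: iterate Proposition~\ref{prop:rootByChildrenEXPECTATION} down the $g$ levels (using the monotonicity $kC\le dC$ at each internal node), collect the factor $\big((1+\epsilon)Cd\big)^g$, bound the leaf contributions by Eq.~\ref{eq:memory}, and conclude that $\mathbb{E}^\pi_\partial[\|\rrt-\bm{1}\|_\pi]\to 0$ whenever
\[
Cd<\frac{1}{1+\epsilon}\quad\text{and}\quad Cd\le \frac{\min_i\pi_i}{\sqrt{1-\min_i\pi_i}}\,\frac{4\epsilon}{1+2\epsilon}.
\]

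The only slip is in the final optimization. Your instinct that ``one may push $\epsilon$ smaller'' is backwards: the gain over Theorem~\ref{prop:unsolvable} is not that the first constraint relaxes, but that the restriction $\epsilon<(1+\sqrt{5})/4$ coming from Theorem~\ref{prop:rootByChildren}.b has disappeared, so you are now free to take $\epsilon$ \emph{large}. Large $\epsilon$ pushes $\tfrac{4\epsilon}{1+2\epsilon}$ toward $2$, loosening the second (feasibility) constraint at the cost of tightening the first. Your trial value $\epsilon=1/2$ gives $Cd<2/3$ and $Cd\le \tfrac{\min_i\pi_i}{\sqrt{1-\min_i\pi_i}}$, which does not reproduce the stated constants (your attempt to massage $1$ into $\tfrac{8}{5}\cdot\tfrac{5}{8}$ confirms this). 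The paper simply takes $\epsilon=2$, for which $\tfrac{1}{1+\epsilon}=\tfrac{1}{3}$ and $\tfrac{4\epsilon}{1+2\epsilon}=\tfrac{8}{5}$, yielding the claimed bound directly.
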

\begin{proof}
Applying Prop. \ref{prop:rootByChildrenEXPECTATION}, the proof is analogous to that of Theorem \ref{prop:unsolvable}. Eventually we need to make the following two bounds hold simultaneously,
\begin{align}
    C d &< \frac{1}{1+\epsilon} \nonumber\\
    C d &\leq \frac{\min_i \pi_i}{\sqrt{1-\min_i \pi_i}}  \frac{4\epsilon}{1 + 2\epsilon}.    
\end{align}
Picking $\epsilon = 2$ yields the result.

\end{proof}

\iffalse

%%%%%%%%%%%%%%%%%%%%%%%%%%%%%%%%%%%%%%%%%%%%%%
%% Example with multiple Appendixes:        %%
%%%%%%%%%%%%%%%%%%%%%%%%%%%%%%%%%%%%%%%%%%%%%%
\begin{appendix}
\section{Title of the first appendix}\label{appA}
If there are more than one appendix, then please refer to it
as \ldots\ in Appendix \ref{appA}, Appendix \ref{appB}, etc.

\section{Title of the second appendix}\label{appB}
\subsection{First subsection of Appendix \protect\ref{appB}}

Use the standard \LaTeX\ commands for headings in \verb|{appendix}|.
Headings and other objects will be numbered automatically.
\begin{equation}
\mathcal{P}=(j_{k,1},j_{k,2},\dots,j_{k,m(k)}). \label{path}
\end{equation}

Sample of cross-reference to the formula (\ref{path}) in Appendix \ref{appB}.
\end{appendix}
\fi
%%%%%%%%%%%%%%%%%%%%%%%%%%%%%%%%%%%%%%%%%%%%%%
%% Support information, if any,             %%
%% should be provided in the                %%
%% Acknowledgements section.                %%
%%%%%%%%%%%%%%%%%%%%%%%%%%%%%%%%%%%%%%%%%%%%%%
\begin{acks}[Acknowledgments]
The author would like to thank Arndt von Haeseler for his useful comments that improved this paper.
\end{acks}

%%%%%%%%%%%%%%%%%%%%%%%%%%%%%%%%%%%%%%%%%%%%%%
%% Funding information, if any,             %%
%% should be provided in the                %%
%% funding section.                         %%
%%%%%%%%%%%%%%%%%%%%%%%%%%%%%%%%%%%%%%%%%%%%%%
\begin{funding}
C.M. was partly supported by the Austrian Science Fund (FWF, grant number I-1824-B22) to Arndt von Haeseler.
\end{funding}

%%%%%%%%%%%%%%%%%%%%%%%%%%%%%%%%%%%%%%%%%%%%%%%%%%%%%%%%%%%%%
%%                  The Bibliography                       %%
%%                                                         %%
%%  imsart-???.bst  will be used to                        %%
%%  create a .BBL file for submission.                     %%
%%                                                         %%
%%  Note that the displayed Bibliography will not          %%
%%  necessarily be rendered by Latex exactly as specified  %%
%%  in the online Instructions for Authors.                %%
%%                                                         %%
%%  MR numbers will be added by VTeX.                      %%
%%                                                         %%
%%  Use \cite{...} to cite references in text.             %%
%%                                                         %%
%%%%%%%%%%%%%%%%%%%%%%%%%%%%%%%%%%%%%%%%%%%%%%%%%%%%%%%%%%%%%

%% if your bibliography is in bibtex format, uncomment commands:
\bibliographystyle{imsart-number} % Style BST file (imsart-number.bst or imsart-nameyear.bst)
\bibliography{aap-template}       % Bibliography file (usually '*.bib')

\end{document}